\newenvironment{proof}{{\indent \indent \it Proof:}}{\hfill $\blacksquare$}
\newtheorem{lemma}{Lemma}
\begin{document}
\allowdisplaybreaks[3]
\title{Joint User Association and Beamforming in Integrated Satellite-HAPS-Ground Networks}

\setlength{\columnsep}{0.21 in}

\author{Shasha~Liu,~\IEEEmembership{Student Member,~IEEE,}
        Hayssam~Dahrouj,~\IEEEmembership{Senior Member,~IEEE,}
        and
        Mohamed-Slim~Alouini,~\IEEEmembership{Fellow,~IEEE}

\thanks {Shasha Liu is with the University of Electronic Science and Technology of China (UESTC) (e-mail: shashaliu@std.uestc.edu.cn).

Hayssam Dahrouj and Mohamed-Slim Alouini are with the Division of Computer, Electrical and Mathematical Sciences and Engineering, King Abdullah University of Science and Technology, Thuwal 23955-6900, Saudi Arabia (e-mail: hayssam.dahrouj@gmail.com, slim.alouini@kaust.edu.sa).

{An extended version of the current paper is available on archive \cite{ShashaTWC2022}.}}

}

\maketitle
\begin{abstract}
This paper proposes, and evaluates the benefit of, one particular hybrid satellite-HAPS-ground network, where one high-altitude-platform-station (HAPS) connected to one geo-satellite assists the ground base-stations (BSs) at serving ground-level users. The paper assumes that the geo-satellite is connected to the HAPS using free-space-optical backhaul link. The HAPS, equipped with multiple antennas, aims at transmitting the geo-satellite data to the users via radio-frequency (RF) links using spatial-multiplexing. Each ground BS, on the other hand, is equipped with multiple antennas, but directly serves the users through the RF links. The paper then focuses on maximizing the network-wide throughput, subject to
{HAPS payload connectivity constraint,}
HAPS and BSs power constraints, and backhaul constraints, so as to jointly determine the user-association strategy of each user (i.e., user to geo-satellite via HAPS, or user to BS), and their associated beamforming vectors. We tackle such a mixed discrete-continuous optimization problem using an iterative approach, where the user-association is determined using a combination of integer linear programming and generalized assignment problems, and where the beamforming strategy is found using a weighted-minimum-mean-squared-error approach. The simulations illustrate the appreciable gain of our proposed algorithm, and highlight the prospects of augmenting the ground networks with beamforming-empowered HAPS for connecting the unconnected, and super-connecting the connected.
\end{abstract}
\begin{IEEEkeywords}
High altitude platform station, satellite-HAPS-ground network, backhaul, user association, beamforming, throughput.
\end{IEEEkeywords}
\IEEEpeerreviewmaketitle
\section{Introduction}
\subsection{Overview}
Large-scale ground-level connectivity has a major impact on
current telecommunications infrastructures, which today support billions of people, and tens of billions of devices \cite{saeed2021P2P}.
As this demand is expected to grow at an even faster pace over the next few years,  a major practice of telecom operators is to densify the terrestrial network  infrastructures \cite{alam2021high}. Such densification may, however, not always be able to satisfy the data ultra-hungry devices and their ambitious quality-of-services requirements in high-interference regimes, and is also not feasible to be realized in rural and remote areas.
Augmenting ground-level communications with spatial networks, e.g., satellites at the GEO-layer (i.e., geo-satellites),
and the stratospheric layer (i.e., High-Altitude Platform Systems (HAPS)),
is expected to revolutionize the physical layer paradigm of the sixth generation of wireless systems and beyond (6G and beyond).
By adding HAPS to the traditional space-ground network, some of the shortcomings and challenges of the existing networks can be well-resolved, especially those related to 6G networks challenges and goals towards connecting the unconnected, and over-connecting the connected \cite{saeed2021P2P,alam2021high}. HAPS enhances terrestrial communications due to its better path-loss profile as compared to higher layers platforms, e.g., cubesats \cite{qiu2019air}.
Compared with Unmanned Aerial Vehicles (UAVs), HAPS can cover a more expansive area due to its elevated altitude and wider beam coverage \cite{arum2020review}.
In addition, HAPS is located at the stratospheric layer,
which provides several appealing deployment characteristics, e.g., the ability to maintain a quasi-stationary state and achieve global connectivity \cite{kurt2021vision,mohammed2011role,karapantazis2005broadband}.
The true assessment of such deployment remains, however, a strong function of the joint resource allocation across the HAPS and ground base-stations, and so this paper proposes one particular framework for optimizing integrated satellite-HAPS-ground networks under specific physical connectivity constraints.

This paper considers a vertical heterogeneous network (VHetNet) comprising one geo-satellite, one HAPS, and several terrestrial base-stations, where both the HAPS and ground base-stations (BSs) are equipped with multiple antennas to simultaneously serve multiple users. The paper assumes that the satellite and HAPS are connected through a free-space optical (FSO) link,
which has a wide bandwith, and is secure, license-free, and suitable for deploying point-to-point (P2P)  communication in the space \cite{saeed2021P2P} \cite{fidler2010optical}.
{
Moreover, since HAPS is located in the stratosphere layer, the FSO link connecting the geo-satellite to HAPS does not go through the troposphere layer, which makes the geo-satellite-to-HAPS FSO link less vulnerable to weather conditions, e.g., turbulence, rain, fog, etc. \cite{saeed2021P2P}.}
FSO links, however, requires strong alignment between the sender and receiver \cite{henniger2010introduction}, which is not suitable for the mobile nature of ground users communications. To this end, the paper assumes that both the HAPS and ground BSs communicate with their respective ground users using radio-frequency (RF) links, where each user can be served by either the ground BSs or by the HAPS. {
Unlike ground base-stations, however, HAPS payload consists of three subsystems: flight control system, energy management subsystem, and communication payload system \cite{kurt2021vision}, which poses an additional constraint on the HAPS connectivity capability. More specifically, HAPS requires a flight control system to handle mobility and maintain a quasi-stationary state, an energy management subsystem for energy storage and distribution, and a communication payload system to handle the communication between the HAPS and other entities. Therefore, our paper accounts for such HAPS particularity, by adding the HAPS payload connectivity constraint to the optimization problem at hand.}
 Further, the transmissions across the FSO satellite-HAPS link are assumed to occur over different optical bands, and do not interfere with each other. The considered network performance becomes, therefore, a strong function of the intra-HAPS interference, intra-BS interference, inter-BS interference, and HAPS-BS interference (hereafter denoted by inter-layer interference). The paper then attempts at managing such multi-mode multi-layered interference by means of associating users with BSs or HAPS, and determining their corresponding beamforming vectors so as to maximize the considered integrated satellite-HAPS-ground network throughput.
\subsection{Related Work}
The problem considered in this paper is related to the optimization of vertical heterogeneous networks, and particularly to the resource allocation problem in integrated satellite-HAPS-ground networks. The tackled problem is also related to user scheduling and beamforming problems, which are studied extensively in the past literature, both individually and jointly, especially in the context of classical interference networks optimization.

Optimizing system throughput in interference networks is often a non-convex optimization problem, and so managing the wireless networks radio resources remains a challenging problem in general \cite{liu2012achieving}. Many recent techniques, therefore, aim at devising numerically reasonable optimization algorithms that promise to offer major performance improvements as compared to conventional systems strategies. For example, the user association scheduling problem is considered in several classical networks, e.g., \cite{kim2009interference,wang2008user,reifert2021distributed, shen2018fractional}, all of which focus on terrestrial networks optimization only. Similarly, the joint user association and power assignment problem is addressed in \cite{douik2020mode}; please also see the references therein. The user-association subproblem considered in the current paper, however, involves a more intricate coupling of cross-mode cross-layered interference and HAPS connectivity constraints, and so the paper leverages techniques such as linear integer programming \cite{douik2020tutorial} and generalized assignment problems \cite{Murat} to develop reasonable heuristics for dealing with the problem discrete intricacies.

The problem of beamforming optimization is also extensively studied in the literature of wireless networks, either using Lagrangian-duality \cite{dahrouj2010coordinated}, semidefinite programming (SDP) \cite{dahrouj2011multicell}, weighted-minimum mean squared error (WMMSE) \cite{shi2011iteratively,dai2014sparse1}, or fractional programming (FP) in \cite{shen2018fractional1}. The joint user association and beamforming problem is also addressed in \cite{yu2013multicell,khan2018optimizing} under specific terrestrial systems scenarios. From a methodology perspective, WMMSE and FP are noticeably popular solutions to maximize the sum-rate in conventional terrestrial networks \cite{shi2011iteratively,dai2014sparse1,shen2018fractional1}. Given the structure of our problem formulation, a part of the current paper proposes a tweaked version of WMMSE to best account for the physical constraints stemming from the FSO backhaul link constraint and the multi-mode multi-layered interference in the context of maximizing the system sum-rate under fixed user-association strategy.


The problem considered in this paper is strongly coupled with the latest advances and studies of HAPS networks, which come at the forefront of sky connectivity latest trends. For example, reference \cite{kurt2021vision} presents a comprehensive overview about the vision and framework of HAPS networks. Reference \cite{kurt2021vision} further highlights the prospects of HAPS systems in radio resource management, which our current paper studies under one particular system architecture. In fact, the HAPS used in the current paper also acts as a super-macro base-station, which is well motivated through reference \cite{alam2021high} that illustrates the role of HAPS in serving both remote and metropolitan dense areas.
{The current paper builds upon such connectivity framework, and accounts for specific HAPS payload connectivity constraint, HAPS and ground base-stations power constraints, and backhaul capacity constraints.}


Together with the advances in HAPS studies comes the generic trend in investigating the system-level benefits of integrated space-air-ground networks, often denoted by VHetNets, which aim at achieving a relative digital inclusion through connecting the unconnected and super-connecting the connected \cite{saeed2021P2P}. To this end, references \cite{alzenad2019coverage,cherif2020downlink} focus on the performance analysis of VHetNets and highlight their coverage expansion capabilities. Resource optimization of VHetNets also emerges nowadays as a powerful paradigm for assessing the true benefits of VHetNets in solving the digital divide problem, e.g., \cite{jia2020sum,wang2022resource,alsharoa2020improvement,yahia2021haps}. While reference \cite{jia2020sum} considers the time allocation, power control, and trajectory optimization of UAV-aided space-air-ground networks, reference \cite{wang2022resource} considers power assignment and transmission protocol in an integrated HAPS-mobile telecommunications (IMT) system. References \cite{alsharoa2020improvement,yahia2021haps} are particularly related to our system model; however, our paper goes beyond both \cite{alsharoa2020improvement,yahia2021haps} by accounting for the joint user association and beamforming problem in a cross-mode cross-layered interference setup. More specifically, on the one hand, reference \cite{yahia2021haps} studies a hybrid RF/FSO VHetNet consisting of satellites, HAPS, and ground base-stations, and focuses on the systematic performance analysis of the networks. On the other hand, reference \cite{alsharoa2020improvement} proposes an integrated satellite-airborne-ground network and optimizes the user access, power assignment, and HAPSs' location under an orthogonal frequency division multiple access scheme. Differently from both \cite{alsharoa2020improvement} and \cite{yahia2021haps}, our paper adopts a multiple-antenna scheme at the HAPS and at the ground base-stations, and optimizes the user association and spatial multiplexing strategies so as to efficiently serve the ground users subject to practical system-level constraints.
\subsection{Contributions}
Unlike the aforementioned papers, this paper proposes, and evaluates the benefit of, one particular integrated satellite-HAPS-ground network comprising one satellite, one HAPS, and terrestrial base-stations (BSs), where the geo-satellite is connected to the HAPS using an FSO link. The HAPS and BSs in turn serve ground users using RF links, and are equipped with multiple antennas. While each user is equipped with a single antenna and can be served either by the HAPS or by one of the ground BSs, the user connectivity to the HAPS depends both on the FSO link quality between the HAPS and the satellite, and the HAPS payload capabilities. The paper then addresses the problem of maximizing the network sum-rate, so as to jointly determine the user association strategy, and the beamforming vectors at the HAPSs and BSs,
subject to {HAPS payload connectivity constraint,}
FSO backhaul constraints, and HAPS and BSs maximum power constraints. The paper solves such a mixed-integer non-convex optimization problem in an iterative modular fashion, i.e., it iterates between solving the user association strategy for fixed beamforming, and solving the beamforming problem for fixed user association. The paper contributions can then be summarized as follows:
\begin{itemize}
\item The paper proposes one particular satellite-HAPS-ground multi-antenna network architecture, specifically designed to augment the ground communications through user scheduling and spatial multiplexing. The paper formulates a mixed discrete-continuous optimization problem to jointly pair users with BSs and HAPS, and to design the beamforming vectors at the HAPSs and BSs to maximize the network sum-rate subject to {HAPS payload connectivity constraint,}
    FSO backhaul constraints, and HAPS and BSs maximum power constraints.
\item The paper solves the non-convex optimization problem through an iterative algorithm. That is, we iteratively optimize each of the optimization parameters by fixing other variables. The user association strategy is first determined by linearizing the original problem so as to enable the utilization of
    integer linear programming (ILP), followed by a generalized assignment problem (GAP)-type solution. The beamforming vectors at the HAPS and BSs are then found through a series of problem reformulations that enable the use of weighted minimum mean square error (WMMSE)-type solutions.
\item  The paper compares the proposed algorithm to classical techniques using Monte-Carlo simulations. The paper results illustrate the appreciable sum-rate gain of the proposed joint user association and beamforming algorithm as compared to classical techniques for various network parameters. The simulations particularly highlight the numerical potential of the proposed integrated satellite-HAPS-ground networks optimization framework for connecting the unconnected, and super-connecting the connected, especially at the high interference regime, and under beefed-up HAPS capabilities (i.e., power, number of antennas, quality of FSO backhauling, etc.).
\item The paper draws a handful of design guidelines and recommendations for deploying HAPS in both remote and metropolitan dense areas.
\end{itemize}

The rest of the paper is organized as follows. Section II presents the system model and problem constraints. The problem formulation and proposed algorithm are discussed in Section III. Section IV presents the simulation results that highlight the numerical prospects of the proposed solution. Finally, the paper is concluded in section V.
\section{System Model and Problem Constraints}
\subsection{System Model}
Consider an integrated satellite-HAPS-ground network consisting of one geo-satellite, one HAPS, $N_{B}$ ground BSs, and $N_{U}$ users.
{The paper assumes that the set of transmitters is denoted by $\mathcal{I}=\{0,1,2,...,N_B\}$, where the $0^{th}$ transmitter points to the HAPS.}
We also denote the set of users $\mathcal{U}$ by $\mathcal{U}=\{1,2,...,N_U\}$.
Let $N_{A}^{i}$ be the number of antennas at BSs and HAPS (i.e., $i=0$ for HAPS, $i=1,2 \cdot\cdot\cdot N_{B}$ for terrestrial BSs).
The paper assumes that the satellite communicates with HAPS via the FSO link, while HAPS and BSs connect to the users via RF links. Each user can be served either by the HAPS or by one of the ground base-stations. Being served by HAPS means that the required data is sent from the satellite to HAPS via the FSO link, and then the HAPS sends it to the ground user via the RF link. The ground BS, however, communicates with the user directly. The paper adopts a space division multiplexing scheme, where all RF links use the same central frequency, and where the HAPS and BSs adopt beamforming to serve multiple users simultaneously.
Fig. \ref{SM} shows an example of the considered network, which consists of one geo-satellite, one HAPS, 3 ground BSs and 9 users. Fig. \ref{SM} also illustrates the information flow from the terrestrial gateway to the geo-satellite. The paper, in fact, assumes that such data feeding happens over different time-scales than the considered downlink transmission, i.e., it does not interfere with the considered satellite-HAPS-ground network.
 We next present the channel model and rate expressions of the hybrid space-air-ground system under study.
\begin{figure}[h]
\centering
\includegraphics[width=3in]{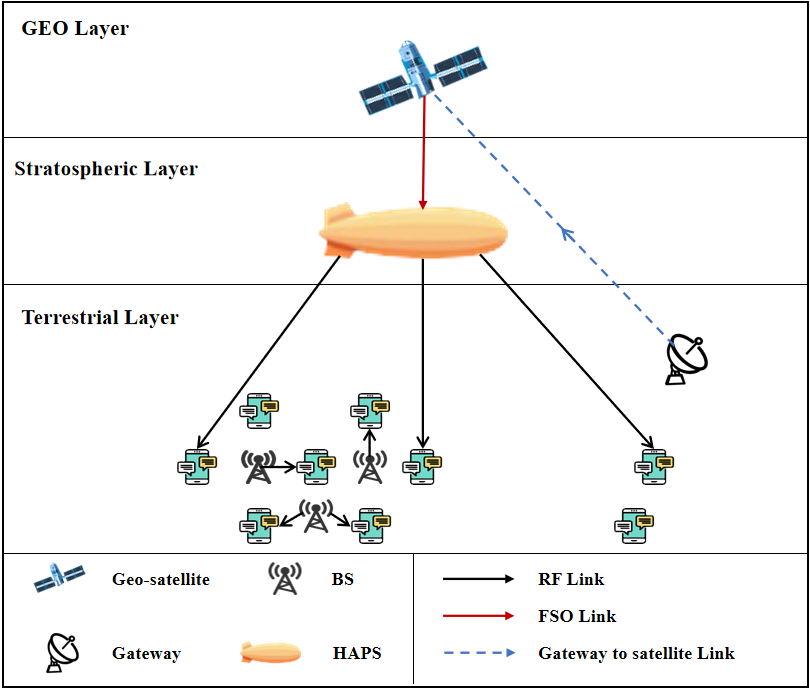}
\caption{An integrated satellite-HAPS-ground network }
\label{SM}
\end{figure}

\begin{table*}[!h]
\centering
\caption{Notations}
\label{tableR}
\begin{tabular}{|p{.1\textwidth} | p{.7\textwidth} | }
\hline
  \textbf{Symbol} &   \textbf{Definition} \\
 \hline
 $N_{B}$ &   The number of terrestrial base-stations\\
  \hline
$N_{U}$ &  The number of users \\
 \hline
$N_{A}^{i}$ & The number of {transmitters antennas} ($i=1,2\cdot\cdot\cdot N_{B}$) or HAPS ($i=0$) \\
\hline
$\gamma_{ij}$ &  Data availability binary variable \\
\hline
$\alpha_{ij}$ &  User association binary variable \\
\hline
$\mathbf{w}_{ij}$ & beamforming vector from {transmitter} $i$ to user $j$ \\
\hline
$R^{FSO}$ &  The achievable data rate at HAPS through the FSO link  \\
\hline
$R_{ij}^{RF}$ &  The achievable data rate of user $j$ served by {transmitter} $i$ through the RF link  \\
\hline
$R_{ij}^{Ground\_BS}$ &  The achievable data rate of user $j$ served by ground BS $i$ through the RF link  \\
\hline
$R_{0j}^{HAPS}$ &  The achievable data rate of user $j$ served by HAPS  \\
\hline
$\hat{R}_{ij}^{RF}$ &  The linearized achievable data rate of user $j$ served by {transmitter} $i$ through the RF link  \\
\hline
$\check{R}_{ij}^{RF}$ &  The interference-free achievable data rate of user $j$ served by {transmitter} $i$ through the RF link  \\
\hline
$\tau_{0j}$ &  Auxiliary variable of achievable data rate of the user $j$ served by HAPS  \\
\hline
$R_{ij}$ &  The achievable data rate of the user $j$ served by {transmitter} $i$  \\
\hline
$\lambda_{0j}$ &  The weight of the $\tau_{0j}$  \\
\hline
$\lambda_{ij}$ &  The weight of the $R_{ij}$  \\
\hline
$\boldsymbol{\rho}_{ij}$ &  The MSE weight for user $j$ served by {transmitter} $i$  \\
\hline
$\mathbf{u}_{ij}$ &  The receive beamforming vector of user $j$ served by {transmitter} $i$ \\
\hline
$\mathbf{e}_{ij}$ &  The MSE for user $j$ served by {transmitter} $i$\\
\hline
\end{tabular}
 \end{table*}
\subsection{FSO Backhaul Capacity}
In this paper, the geo-satellite and the HAPS are connected by the FSO link, the capacity of which strongly depends on the atmospheric attenuation (absorption, scattering). The data rate between the HAPS and the satellite, denoted by $R^{FSO}$, can then be written as \cite{alzenad2018fso}:
\begin{equation}
\label{RSO}
R^{FSO}=\frac{P_{t}\eta_{t}\eta_{r}10^{\frac{-L_{poi}}{10}}10^{\frac{-L_{atm}}{10}}A_{R}}{A_{B}E_{p}\eta_{b}},
\end{equation}
where $P_{t}$ denotes the transmit power of satellite, $\eta_{t}$ and $\eta_{r}$ stand for the optical efficiencies of the transmitter and receiver, respectively, $L_{poi}$ is the pointing loss, $L_{atm}$ is the atmospheric attenuation, $A_{R}$ and $A_{B}$ is the area of the FSO receiver and beam ({i.e., $\frac{A_{R}}{A_{B}}$ is the geometrical loss, which characters the pathloss.}),
$E_{p}$ denotes the photon energy respectively, and $\eta_{b}$ represents the receiver sensitivity.
\subsection{RF Channel Model}
According to \cite{alsharoa2020improvement}, the channel {coefficient} between the $n^{th}$ antenna of the $i^{th}$ {transmitter} ($i=0$ for HAPS, $i=1,2 \cdot\cdot\cdot N_{B}$ for BSs) and the $j^{th}$ user, denoted by $h_{ij,n}$, is given by
\begin{equation}
\label{RF_channel}
h_{ij,n}=\left(\frac{c}{4\pi d_{ij,n}f_{c}}\right){A_{ij,n}}F_{ij,n},
\end{equation}
where c is the speed of light, $f_{c}$ is the carrier frequency, $d_{ij,n}$ is the distance between the $n^{th}$ antenna of the $i^{th}$ {transmitter} and the $j^{th}$ user. For terrestrial BSs, $A_{ij,n}$ corresponds to a log-normal shadowing and $F_{ij,n}$ is the Rayleigh small scale gain. In the case of HAPS, we omit $A_{0j,n}$  and represent $F_{0j,n}$ as Rician small-scale gain denoted by $\kappa_{HAPS}$, due to the strong line-of-sight between the HAPS and the ground users. This, however, is adopted without loss of generality as the optimization framework rather depends on the values of channel vectors. More specifically, in the rest of the paper, we simply denote the RF channel vector between {transmitter} $i$ and user $j$ ($i=0$ for HAPS, $i=1,2 \cdot\cdot\cdot N_{B}$ for BSs) as $\mathbf{h}_{ij}\in \mathbb{C}^{N_{A}^{i}}$, where  $\mathbf{h}_{ij}=[h_{ij,1}, h_{ij,2}, \cdot\cdot\cdot , h_{ij,n}, \cdot\cdot\cdot ,h_{ij,N_{A}^{i}}]^{T}$.

\subsection{User Association Scheme}
The paper considers the practical consideration that user $j$ request may (or may not) be available at transmitter $i$. To this end, we introduce the binary variable $\gamma_{ij}$ which is defined as 1 if data required by user $j$ is available at {transmitter} $i$, and zero otherwise, $\forall i\in \mathcal{I}$ and $j\in \mathcal{U}$. We note that the variables $\gamma_{ij}$ are fixed in the context of our paper, and are known to the optimizer. The paper then assumes that each ground user can be served by one transmitter at most (i.e., either by the HAPS or by one of the ground BSs.).
{Furthermore, to account for the HAPS payload connectivity constraint, we denote by $K_{0}$ the maximum number of users that the HAPS can serve.}
Introduce a binary variable $\alpha_{ij}$, which is equal to $1$ if the user $j$ is served by the {transmitter} $i$ and $0$ otherwise, which yields the following connectivity constraints:
\begin{equation}
\label{CBS1}
   \sum_{j=1}^{N_{U}}\gamma_{0j}\alpha_{0j} \leq K_{0},
\end{equation}
\begin{equation}
\label{CBS2}
  \sum_{i=0}^{N_{B}}\gamma_{ij}\alpha_{ij} \leq 1,\ \forall j \in \mathcal{U}.
\end{equation}
\subsection{Rates Expressions }
This paper considers that multiuser downlink transmit beamforming is employed at both the HAPS and the ground BSs. Let $s_{ij}$ represent the information signal for user $j$ when served by {transmitter} $i$, $\forall i\in \mathcal{I}$ and $j\in \mathcal{U}$, and let $\mathbf{w}_{ij}\in \mathbb{C}^{N_{A}^{i}}$ be the beamforming vector associated with $s_{ij}$. Therefore, the received signal at the $j^{th}$ user, denoted as $y_{j}$, is given by
\begin{equation}
\label{sBS}
y_{j}=\sum_{b=0}^{N_{B}}\sum_{u=1}^{N_{U}}\gamma_{bu}\alpha_{bu}s_{bu}\mathbf{h}_{bj}^{H}\mathbf{w}_{bu} + z_{j},
\end{equation}
where $\mathbf{h}_{bj}=[h_{bj,1}, h_{bj,2}, \cdot\cdot\cdot , h_{bj,n}, \cdot\cdot\cdot , h_{bj,N_{A}^{b}}]^{T}$  is the vector channel from {transmitter} $b$ to user $j$, and where $z_{j}$ is the additive white circularly symmetric Gaussian complex noise with variance $\frac{\sigma^{2}}{2}$ on each of its real and imaginary components.

The above expression (\ref{sBS}) has implicity four types of interference, namely, intra-HAPS interference, intra-base-station interference, inter-layer interference, and inter-base-station interference. 

We next present the rates expressions of each user $j$ according to the two types of user-association possibilities. If user $j$ is served by {transmitter} $i$, the associated signal-to-interference-plus-noise ratio (SINR), denoted by ($\mathrm{SINR}_{ij}$), can be expressed as:
\begin{equation}
\label{SINR}
\mathrm{SINR}_{ij}=\frac{|\mathbf{h}_{ij}^{H}\mathbf{w}_{ij}|^{2}}{\sum_{u=1,u\neq j}^{N_{U}}\sum_{b=0}^{N_{B}}\gamma_{bu}\alpha_{bu}|\mathbf{h}_{bj}^{H}\mathbf{w}_{bu}|^{2} + \sigma^{2}}.
\end{equation}
The user achievable RF rate can then be written as:
\begin{equation}
\label{R_RF}
  R_{ij}^{RF}\!=\!\beta\log_{2}\left(1\!+\!\frac{|\mathbf{h}_{ij}^{H}\mathbf{w}_{ij}|^{2}}{\sum_{u=1,u\neq j}^{N_{U}}\sum_{b=0}^{N_{B}}\gamma_{bu}\alpha_{bu}|\mathbf{h}_{bj}^{H}\mathbf{w}_{bu}|^{2} \!+\! \sigma^{2}}\right),
\end{equation}
\noindent where $\beta$ is the transmission bandwidth. In the case where the $j$th user is served by a terrestrial BS $i$ (i.e., $ i\in \mathcal{I}-\{0\}$), the data rate of user $j$ is as in (\ref{R_RF}), which can be written as:
\begin{equation}
\label{EHMb}
R_{ij}^{Ground\_BS}=R_{ij}^{RF}, \ \forall i=1,2,\cdot\cdot\cdot N_{B}.
\end{equation}
In the other case where the $j$th user is served by the HAPS (i.e., $i=0$), the data rate via RF link (i.e., the rate expression in (\ref{R_RF}) for $i=0$. ) can be written as:
\begin{equation}
\label{EHMc}
R_{0j}^{HAPS\_RF}=R_{0j}^{RF}.
\end{equation}
Further, the data rate of user $j$ served by HAPS, donated by $R_{0j}^{HAPS}$, can be written as follows:
\begin{equation}
\label{R_HAPS}
  R_{0j}^{HAPS}=\min\{R_{0j}^{HAPS\_RF}, R^{FSO}\},
\end{equation}
For the ease of presentation, we provide a list of the expressions used in this paper in Table \ref{tableR}. The paper next presents the considered optimization problem, together with the algorithm devised to addresses the problem intricacies.

\section{Problem Formulation and Proposed Solution}
The paper focuses on  maximizing the network sum-rate by optimizing the user association strategy, and the beamforming vectors at both the HAPS and the ground base-stations, subject to {HAPS payload connectivity constraint,}
maximum transmit power constraints, and backhaul constraints. Let $P_{i}^{max}$ be the maximal allowable power at the HAPS and BSs, $\forall i\in \mathcal{I}$. Our optimization problem can then be mathematically written as follows:
\begin{subequations}
\label{EHM}
\begin{eqnarray}
\label{EHMa}
&\displaystyle\!\!\!\!\!\max_{\alpha_{ij},\mathbf{w}_{ij}}\!\!\!\!\!&  \!\sum_{j=1}^{N_{U}}\!\!\left(\!\!\sum_{i=1}^{N_{B}}\!\gamma_{ij}\alpha_{ij}R_{ij}^{Ground\_ BS}\!\!+\!\!\gamma_{0j}\alpha_{0j}R_{0j}^{HAPS}\!\!\right)\!\!\\
\label{EHMd}
&s.t.& (\ref{EHMb})-(\ref{EHMc}), (\ref{R_HAPS}),\\
\label{EHMe}
&&  \sum_{j=1}^{N_{U}}\gamma_{0j}\alpha_{0j} \leq {K_{0}},\\
\label{EHMf}
&& \sum_{i=0}^{N_{B}}\gamma_{ij}\alpha_{ij} \leq 1,\  \forall j \in \mathcal{U},\\
\label{EHMg}
&& \alpha_{ij}\in \{0, 1\},\  \forall i \in \mathcal{I}, \forall j \in \mathcal{U},\\
\label{EHMh}
&&\sum_{j=1}^{N_{U}}\gamma_{ij}\alpha_{ij}\mathbf{w}_{ij}^{H}\mathbf{w}_{ij}\leq P_{i}^{max},\ \forall i \in \mathcal{I}.
\end{eqnarray}
\end{subequations}

\noindent where the optimization in problem (\ref{EHM}) is jointly over the binary association variables $\alpha_{ij}$, and the continuous beamforming variables $\mathbf{w}_{ij}$, $\forall i\in \mathcal{I}$ and $j\in \mathcal{U}$.
{The objective function (\ref{EHMa}) is the sum-rate of the integrated network, which consists of the rates of the users served by ground BS and the rates of the users served by HAPS.}
Constraint (\ref{EHMb}) is the achievable data rate of a user served by the terrestrial BS, constraint (\ref{EHMc}) is the achievable data rate of a user served by HAPS via RF link, and (\ref{R_HAPS}) is the achievable data rate of a user served by HAPS subject to FSO backhaul constraint.
Constraint (\ref{EHMe}) guarantees that the {HAPS} can serve at most {$K_{0}$} users, and constraint (\ref{EHMf}) guarantees that a user can be served by one {transmitter} at most.
Finally, constraint (\ref{EHMh}) imposes maximal power constraints on both the HAPS and the ground BSs.

The problem (\ref{EHM}) is a mixed-integer non-convex optimization problem. The paper, therefore, next proposes solving the problem through an iterative approach. That is, first, for fixed beamforming vectors, the user association strategy is determined by linearizing the original problem so as to enable the utilization of
integer linear programming, followed by a generalized assignment problem-type solution. Then, for fixed user association, the beamforming vectors are found through a series of problem reformulations that enable the use of WMMSE-type solutions.
\subsection{User Association Strategy}
This part focuses on solving problem (\ref{EHM}) over the user association strategy $\alpha_{ij}$ by fixing the beamforming vectors at the HAPS and the ground BSs. By first substituting the min term of (\ref{R_HAPS}) in the objective function, we write problem (\ref{EHM}) as the following binary optimization problem:
\begin{subequations}
\label{EHM2}
\begin{eqnarray}
\label{EHM2a}
&\displaystyle \max_{\alpha_{ij}}& \sum_{i=1}^{N_{B}}\sum_{j=1}^{N_{U}}\gamma_{ij}\alpha_{ij}R_{ij}^{Ground\_BS}\nonumber\\
             & &
\!+\!\ \sum_{j=1}^{N_{U}}\gamma_{0j}\alpha_{0j}\min\{R_{0j}^{HAPS\_RF}, R^{FSO}\},\\
\label{EHM2b}
&s.t.&  (\ref{EHMb})-(\ref{EHMc}), (\ref{EHMe})-(\ref{EHMh})
\end{eqnarray}
\end{subequations}

\noindent where the optimization is over the binary variable $\alpha_{ij}$. Problem (\ref{EHM2}) remains, however, a complex non-linear discrete optimization problem, the global optimal solution of which would require an exhaustive search of exponential complexity. We next address such intricacies by first linearizing (\ref{EHM2}), and then adopting a GAP-based heuristic which proves to be an adequate numerical solution in the context of our problem formulation.
\subsubsection{Integer Linear Problem Formulation}
To linearize problem (\ref{EHM2}), we first replace the $\min\{.,.\}$ term in (\ref{EHM2}) with an auxiliary variable $t_{0j}$ given by:
\begin{equation}
\label{t_0j}
 t_{0j}=\min\{R_{0j}^{HAPS\_RF}, R^{FSO}\}.
\end{equation}
In order to decouple the variables $\alpha_{ij}$ from (\ref{SINR}) and linearize problem (\ref{EHM2}),
we add one auxiliary additional constraint as follows:
\begin{equation}
\label{CAdd1}
 0\leq\mathbf{w}_{ij}^{H}\mathbf{w}_{ij}\leq \gamma_{ij}\alpha_{ij}M_{ij},\ \forall i \in \mathcal{I},\ \forall j \in \mathcal{U},
\end{equation}
where $M_{ij}$ is a sufficiently large constant, added as an artifact for linearizing (\ref{EHM2}). Subject to such bounding constraint, we rewrite (\ref{R_RF}) as:
\begin{equation}
\label{R_RF_ILP}
\hat{R}_{ij}^{RF}=\beta\log_{2}\left(1+ \frac{|\mathbf{h}_{ij}^{H}\mathbf{w}_{ij}|^{2}}{\sum_{u=1,u\neq j}^{N_{U}}\sum_{b=0}^{N_{B}}|\mathbf{h}_{bj}^{H}\mathbf{w}_{bu}|^{2} + \sigma^{2}}\right).
\end{equation}
The problem (\ref{EHM2}) can, therefore, be reformulated as follows:
\begin{subequations}
\label{EHM3}
\begin{eqnarray}
\label{EHM3a}
&\displaystyle \max_{\alpha_{ij}}&\sum_{j=1}^{N_{U}}(\sum_{i=1}^{N_{B}}\gamma_{ij}\alpha_{ij}\hat{R}_{ij}^{RF}+\gamma_{0j}\alpha_{0j}t_{0j}),\\
\label{EHM3b}
&s.t.&  \sum_{j=1}^{N_{U}}\gamma_{0j}\alpha_{0j} \leq {K_{0}}, \\
\label{EHM3c}
&& \sum_{i=0}^{N_{B}}\gamma_{ij}\alpha_{ij} \leq 1,\  \forall j \in \mathcal{U},\\
\label{EHM3d}
&&\sum_{j=1}^{N_{U}}\gamma_{ij}\alpha_{ij}\mathbf{w}_{ij}^{H}\mathbf{w}_{ij}\leq P_{i}^{max},\ \forall i \in \mathcal{I},\\
\label{EHM3e}
&&  0\!\leq\!\mathbf{w}_{ij}^{H}\mathbf{w}_{ij}\!\leq\! \gamma_{ij}\alpha_{ij}M_{ij}, \ \forall i \!\in\! \mathcal{I},\ \forall j \!\in \! \mathcal{U},\\
\label{EHMg}
&& \alpha_{ij}\in \{0, 1\},\  \forall i \in \mathcal{I}, \forall j \in \mathcal{U}.
\end{eqnarray}
\end{subequations}
The optimization problem (\ref{EHM3}) becomes an integer linear problem (ILP), which can be solved using off-the-shelf available algorithms, e.g., \cite{douik2020tutorial, ganian2019solving}. ILP solvers, however, often provide suboptimal solutions to (\ref{EHM3}), and so we next improve upon the ILP solution by proposing an additional heuristic which exhibits appealing numerical prospects, as illustrated in the simulations section of the paper.

\subsubsection{Integer Linear Problem and Generalized Assignment Problem (ILP-GAP)}
To further improve upon the ILP-based solution proposed above, the paper goes one step beyond by proposing an additional heuristic that relies on maximizing an auxiliary interference-free function of the original objective function of the optimization problem (\ref{R_RF}). Such heuristic allows to use the ILP-based solution as an initial point to solve a generalized assignment problem of reasonable computational complexity; see \cite{douik2020mode,Murat,ross1975branch} and references therein. The simulations results of our paper later illustrate the numerical prospect of our proposed heuristic ILP-GAP scheme, as it outperforms the classical user association techniques.

More specifically, decouple the user association dependency by approximating the rate expression (\ref{R_RF}) with an interference-free term as:
\begin{equation}
\label{R_RF_GAP}
  \check{R}_{ij}^{RF}=\beta\log_{2}\left(1 + \frac{|\mathbf{h}_{ij}^{H}\mathbf{w}_{ij}|^{2}}{\sigma^{2}}\right).
\end{equation}
We now reformulate problem (\ref{EHM2}) as a GAP. More specifically, given the set of users $\mathcal{U}$  and the set of transmitters $\mathcal{I}$ (i.e., {knapsacks}), if the $j^{th}$ user associates with the $0^{th}$ knapsack (i.e., user $j$ is connected to the HAPS), the profit is {$t_{0j}$}. Otherwise, if the $j^{th}$ user associates with the $i^{th}$ knapsack ($i\neq 0$), the profit becomes {$\check{R}_{ij}^{RF}$}. Hence, the problem (\ref{EHM3}) can be reformulated as follows:
\begin{subequations}
\label{EHM4}
\begin{eqnarray}
\label{EHM4a}
&\displaystyle \max_{\alpha_{ij}}&\sum_{j=1}^{N_{U}}\left(\sum_{i=1}^{N_{B}}\gamma_{ij}\alpha_{ij}\check{R}_{ij}^{RF}+\gamma_{0j}\alpha_{0j}t_{0j}\right),\\
\label{EHM4b}
&s.t.&  \sum_{j=1}^{N_{U}}\gamma_{0j}\alpha_{0j} \leq {K_{0}},\\
\label{EHM4c}
&& \sum_{i=0}^{N_{B}}\gamma_{ij}\alpha_{ij} \leq 1,\  \forall j \in \mathcal{U},\\
\label{EHM4d}
&&\sum_{j=1}^{N_{U}}\gamma_{ij}\alpha_{ij}\mathbf{w}_{ij}^{H}\mathbf{w}_{ij}\leq P_{i}^{max},\ \forall i \in \mathcal{I},\\
\label{EHM4e}
&& \alpha_{ij}\in \{0, 1\},\  \forall i \in \mathcal{I}, \forall j \in \mathcal{U},
\end{eqnarray}
\end{subequations}

\noindent where constraint (\ref{EHM4b}) is the HAPS payload connectivity constraint, constraint (\ref{EHM4c}) guarantees that a user can be assigned to one transmitter only, and constraint (\ref{EHM4c}) denotes the power constraints at every transmitter $i \in \mathcal{I}$.

The above problem (\ref{EHM4}) can be readily cast as a GAP \cite{Murat}, which can be solved using a handful of efficient algorithms. In this paper, we utilize the branch and bound techniques for its provable performance guarantees \cite{douik2020tutorial}.
The solution of GAP is numerically manageable, yet strongly dependable on the initialization strategy \cite{douik2020mode,Murat,ross1975branch}. Our paper, therefore, adopts the solution reached by solving the ILP (\ref{EHM3}) as the initial point, owing to its good numerical prospects. We note that the variable $t_{0j}$ eventually gets updated after solving the above GAP. The steps of such iterative process, i.e., ILP, GAP and updating $t_{0j}$ (in this order), prove to be an efficient solution to solve the complicated user association problem (\ref{EHM2}) as shown in the simulations section, and are summarized in Algorithm 1 description below.
\begin{algorithm}
 \caption{ Determine the user association}
 \label{G1}
 \begin{enumerate}
  \item Fixed the beamforming vector ($\mathbf{w}_{ij}$).
  \item Solve the integer linear problem (\ref{EHM3}) and update $\alpha_{ij}$ to get the initial point.
  \item Set $m=0$.
  \item Use (\ref{t_0j}) to compute $t_{0j}^{0}$ and calculate the corresponding sum-rate $R_{initial}$.
  \item Define $R_{optimization}=R_{initial}$.
  \item Set $m=m+1$.
  \item Solve the general assignment problem (\ref{EHM4}) and update $\alpha_{ij}^{m}$.
  \item Calculate the sum rate $R_{sum}^{m}$ and $t_{0j}^{m}$. If $R_{sum}^{m}>R_{optimization}$, then $R_{optimization}=R_{sum}^{m}$.
  \item Go to step 7 and stop at convergence (i.e, when $|R_{sum}^{m}-R_{sum}^{m-1}|\leq \epsilon$ ).
 \end{enumerate}
 \end{algorithm}
\vspace{-0.4cm}
\subsection{Beamforming Vectors Optimization}
We now focus on finding the beamforming vectors by fixing the user association variables $\alpha_{ij}$, which are determined in the previous subsection. Problem (\ref{EHM}) can now be rewritten as:
\begin{subequations}
\label{EHM5}
\begin{eqnarray}
\label{EHM5a}
&\displaystyle\max_{\mathbf{w}_{ij}}&  \sum_{i=1}^{N_{B}}\sum_{j=1}^{N_{U}}\gamma_{ij}\alpha_{ij}R_{ij}^{Ground\_BS} \nonumber\\
             & &
\!+\!\sum_{j=1}^{N_{U}}\gamma_{0j}\alpha_{0j}\min\{R_{0j}^{HAPS\_RF},R^{FSO}\},\\
\label{EHM5b}
&s.t.&  \sum_{j=1}^{N_{U}}\gamma_{ij}\alpha_{ij}\mathbf{w}_{ij}^{H}\mathbf{w}_{ij}\leq P_{i}^{max},\ \forall i \in \mathcal{I},
\end{eqnarray}
\end{subequations}
\noindent where the optimization is over the beamforming vectors $\mathbf{w}_{ij}$. The above problem (\ref{EHM5}) is a non-convex optimization problem due the cross-mode cross-layered interference coupling in the SINR's expressions, as well as the min term stemming from the FSO backhaul constraints. The paper next tackles the difficulties of problem (\ref{EHM5}) by proposing a tweaked version of WMMSE \cite{shi2011iteratively} that best accounts for the current problem physical constraints.
\subsubsection{WMMSE Reformulation}
We first note that the minimum term in the optimization objective in (\ref{EHM5}) makes our problem different from the classical WMMSE formulation \cite{shi2011iteratively}. We, therefore, next provide a series of problem reformulations with proper outer loops updates, so as to develop a WMMSE-like solution for solving problem (\ref{EHM5}).
First, based on the values of $\alpha_{ij}$ determined in the previous subsection, one can readily determine the set of users served both the HAPS ($i=0$), and the set of users served both the ground BSs ($i=1,\cdot\cdot\cdot,N_B$). To this end, we define  $\mathcal{U}_{i}=\{j \in \mathcal{U}\ | \ \gamma_{ij}\alpha_{ij}=1\}$ as the set of users served by {transmitter} $i$ {($i=0$ for HAPS, $i=1,2 \cdot\cdot\cdot N_{B}$ for BSs)}.
Problem (\ref{EHM5}) can now be reformulated as:
\begin{subequations}
\label{EHM6}
\begin{eqnarray}
&\displaystyle \max_{\mathbf{w}_{ij}}& \sum_{i\in \mathcal{I}-\{0\}}\sum_{j\in\mathcal{U}_{i}}R_{ij}^{Ground\_BS} \nonumber\\
             & &
+\sum_{j\in\mathcal{U}_{0}}\min\{R_{0j}^{HAPS\_RF},R^{FSO}\},\\\
\label{EHM6a}
&s.t.&  \sum_{j\in\mathcal{U}_{i}}\mathbf{w}_{ij}^{H}\mathbf{w}_{ij}\leq P_{i}^{max}, \ \forall i \in \mathcal{I}.
\end{eqnarray}
\end{subequations}

Then, introduce an auxiliary variable $\tau_{0j}$ defined as:
\begin{equation}
\label{tou_0_1}
 \tau_{0j}=\min\{R_{0j}^{HAPS\_RF}, R^{FSO}\}, \ \forall j\in\mathcal{U}_{0}. \\
\end{equation}

$\tau_{0j}$ can, therefore, be written as:
\begin{equation}
\label{tou_0_2}
\tau_{0j}=
\begin{cases}
 R^{FSO},& \ R^{FSO}\leq R_{0j}^{HAPS\_RF}. \\
R_{0j}^{HAPS\_RF},& \ R^{FSO}>R_{0j}^{HAPS\_RF}.
\end{cases}
\end{equation}
We now introduce another auxiliary variable $\lambda_{ij}$, which can be regarded as the \textit{weight} of the rate-terms of user $j$ served by the {transmitter} $i$, i.e., $R_{ij}$, within the objective function of problem (\ref{EHM5}). For $i\neq0$ (i.e., in the case of ground BSs), $\lambda_{ij}=1$.  For $i=0$ (i.e., in the case of HAPS)$, \lambda_{0j}$ can be defined as:
\begin{equation}
\label{weight}
 \lambda_{0j}=
\begin{cases}
1,& \ \tau_{0j}=R_{0j}^{HAPS\_RF}.\\
0,& \ \tau_{0j}=R^{FSO}.
\end{cases}
\end{equation}
We note that the above equation (\ref{weight}) is mainly due to the fact that if $\tau_{0j}$ is equal to the constant FSO link rate, the optimization problem would no longer depend on the value of $\mathbf{w}_{0j}$, and so we can omit such constant from the objective function. Problem (\ref{EHM6}) can now be re-written as follows:
\begin{subequations}
\label{EHM7_first}
\begin{eqnarray}
&\displaystyle \max_{\mathbf{w}_{ij}}& \sum_{i\in \mathcal{I}-\{0\}}\sum_{j\in\mathcal{U}_{i}}\lambda_{ij}R_{ij}+\sum_{j\in\mathcal{U}_{0}}\lambda_{0j}\tau_{0j},\\\
\label{EHM7a}
&s.t.&  \sum_{j\in\mathcal{U}_{i}}\mathbf{w}_{ij}^{H}\mathbf{w}_{ij}\leq P_{i}^{max}, \ \forall i \in \mathcal{I}.
\end{eqnarray}
\end{subequations}

At this stage, we note that our problem reformulation (\ref{EHM7_first}) now emulates, to some extent, a sum-rate maximization problem subject to transmit power constraints, i.e., similar to the classical WMMSE formulation \cite{shi2011iteratively}. In the context of our paper, problem (\ref{EHM6}) has the equivalent optimal solution with the following WMMSE minimization problem:
\begin{subequations}
\label{EHM7}
\begin{eqnarray}
&\displaystyle  \min_{\boldsymbol{\rho}_{ij},\mathbf{u}_{ij},\mathbf{w}_{ij}} & \sum_{i\in\mathcal{I}}\sum_{j\in\mathcal{U}_{i}}\lambda_{ij}\left(\mathrm{Tr}(\boldsymbol{\rho}_{ij}\mathbf{e}_{ij})-\log \boldsymbol{\rho}_{ij}\right),\\
\label{EHM8a}
&s.t.&  \sum_{j\in\mathcal{U}_{i}}\mathbf{w}_{ij}^{H}\mathbf{w}_{ij}\leq P_{i}^{max},\ \forall i \in \mathcal{I},
\end{eqnarray}
\end{subequations}
where $ P_{i}^{max}$ is the maximum power of BS $i$, $\boldsymbol{\rho}_{ij}$ denotes the mean squared
error (MSE) weight for user $j$ served by {transmitter} $i$ (i.e., $\forall j \in \mathcal{U}_{i}$), and $\mathbf{u}_{ij}$ is the receive beamforming vector at the user $j$ when served by {transmitter} $i$. Finally, $\mathbf{e}_{ij}$ is the MSE at the user $j$ when served by {transmitter} $i$ defined as:
\begin{footnotesize}
\begin{equation}
\begin{split}
\label{error}
  \mathbf{e}_{ij}=&(\mathbf{I}-\mathbf{u}_{ij}^{H}\mathbf{h}_{ij}^{H}\mathbf{w}_{ij})(\mathbf{I}-\mathbf{u}_{ij}^{H}\mathbf{h}_{ij}^{H}\mathbf{w}_{ij})^{H}
  \\&+\sum_{(b,l)\neq(i,j)}\mathbf{u}_{ij}\mathbf{h}_{bj}^{H}\mathbf{w}_{bl}
  \mathbf{w}_{bl}^{H}\mathbf{h}_{bj}\mathbf{u}_{ij}^{H}
  +\sigma^{2} \mathbf{u}_{ij}^{H}\mathbf{u}_{ij}, \forall i \in \mathcal{I},\ \forall j\in\mathcal{U}_{i}.
  \end{split}
\end{equation}
\end{footnotesize}
\subsubsection{Beamforming Algorithm (Algorithm 2)}
The reformulated problem (\ref{EHM7}) is convex in each of the optimization variables  $\boldsymbol{\rho}_{ij},\mathbf{u}_{ij},\mathbf{w}_{ij}$. Therefore, one can solve (\ref{EHM7}) via finding one variable by fixing two other variables. More specifically, $\forall i \in \mathcal{I}, \ \forall j\in\mathcal{U}_{i}$, the optimal receiver $\mathbf{u}_{ij}$ under fixed $\mathbf{w}_{ij}$ and $\boldsymbol{\rho}_{ij}$ is an MMSE receiver defined by:
\begin{equation}
  \label{MMSE receiver}
  \mathbf{u}_{ij}=\mathbf{u}_{ij}^{mmse}=\frac{\mathbf{h}_{ij}^{H}\mathbf{w}_{ij}}{\sum_{b\in\mathcal{I}}\sum_{l\in\mathcal{U}_{b}}\mathbf{h}_{bj}^{H}\mathbf{w}_{bl}
  \mathbf{w}_{bl}^{H}\mathbf{h}_{bj}+\sigma^{2}\mathbf{I}}.
\end{equation}
Similarly, the optimal MSE weight $\boldsymbol{\rho}_{ij}$ under fixed $\mathbf{u}_{ij}$ and $\mathbf{w}_{ij}$ can be written as:
\begin{equation}
  \label{MSE_weight}
  \boldsymbol{\rho}_{ij}=\mathbf{e}_{ij}^{-1},\forall i \in \mathcal{I}, \ \forall j\in\mathcal{U}_{i}.
\end{equation}
Lastly, finding the optimal transmit beamformer $\mathbf{w}_{ij}$ under fixed $\boldsymbol{\rho}_{ij},\mathbf{u}_{ij}$ can be cast as convex quadratic optimization problem, which can be solved efficiently \cite{boyd2004convex}. The above updates of $\boldsymbol{\rho}_{ij},\mathbf{u}_{ij},\mathbf{w}_{ij}$, i.e., (\ref{error}-\ref{MSE_weight}),  are eventually executed in an iterative way together with the proper updates of $\tau_{ij}$ and $\lambda_{ij}$ according to (\ref{tou_0_2}) and (\ref{weight}), respectively, which enables finding the beamforming vectors $\mathbf{w}_{ij}$ efficiently, as presented in Algorithm 2. Such algorithm is in fact guaranteed to converge to a stationary point of (\ref{EHM6}), as further illustrated in the next lemma.
\begin{algorithm}[h!]
 \caption{Determine Beamforming Vectors}
 \label{G2}
 \begin{enumerate}
  \item Fix the user association strategy.
  \item set $m=0$.
  \item Fix initial beamforming vectors $\mathbf{w}_{ij}$.
  \item Calculate the $\tau_{ij}^{m}$ and determine the $\lambda_{ij}^{m}$, according to (\ref{tou_0_2}), (\ref{weight}).
  \item Fix $\mathbf{w}_{ij}$, and update $\mathbf{u}_{ij}$, according to (\ref{MMSE receiver}).
  \item Fix $\mathbf{u}_{ij}$ and $\mathbf{w}_{ij}$, and update $\boldsymbol{\rho}_{ij}=\mathbf{e}_{ij}^{-1}$.
  \item Calculate and update the optimal transmit beamformer $\mathbf{w}_{ij}$ under fixing $\boldsymbol{\rho}_{ij},\mathbf{u}_{ij}$.
  \item Compute the sum-rate $R_{sum}^{m}$.
  \item set $m=m+1$.
  \item Go to step 4 and stop at convergence (i.e, when $|R_{sum}^{m}-R_{sum}^{m-1}|\leq \epsilon$ ).
 \end{enumerate}
 \end{algorithm}

\begin{lemma}
The solution obtained by Algorithm 2 converges to a stationary point of (\ref{EHM6}).
\end{lemma}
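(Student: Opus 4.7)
The plan is to reduce the convergence analysis of Algorithm 2 to the classical WMMSE convergence of \cite{shi2011iteratively}, while carefully accounting for the non-smooth $\min\{R_{0j}^{HAPS\_RF},R^{FSO}\}$ term in the objective of (\ref{EHM6}). First, I would rewrite (\ref{EHM6}) in smooth epigraph form by introducing variables $\tau_{0j}$ together with the two inequality constraints $\tau_{0j}\le R_{0j}^{HAPS\_RF}(\mathbf{w})$ and $\tau_{0j}\le R^{FSO}$; this reformulation is equivalent to (\ref{EHM6}) because $\tau_{0j}$ saturates the min at any optimum. The associated KKT multipliers $\mu_{0j}^{1}$ and $\mu_{0j}^{2}$ of the two constraints must satisfy $\mu_{0j}^{1}+\mu_{0j}^{2}=1$ (from stationarity in $\tau_{0j}$), and the rule (\ref{weight}) used in Algorithm 2 can be read as the primal-driven assignment $\mu_{0j}^{1}=\lambda_{0j}$, $\mu_{0j}^{2}=1-\lambda_{0j}$, consistent with complementary slackness: $\lambda_{0j}=1$ exactly when the HAPS-rate constraint is tight, and $\lambda_{0j}=0$ when the FSO-rate constraint is tight.

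Next, I would invoke the WMMSE-rate equivalence of \cite{shi2011iteratively}: for frozen weights $\lambda_{ij}$, the stationary points of the weighted sum-rate problem (\ref{EHM7_first}) coincide with those of the WMMSE problem (\ref{EHM7}). The outer iterations of Algorithm 2 interleave one cycle of block updates of $\mathbf{u}_{ij}$, $\boldsymbol{\rho}_{ij}$, $\mathbf{w}_{ij}$ on (\ref{EHM7}) --- each block admitting a unique minimizer, either in closed form via (\ref{MMSE receiver})--(\ref{MSE_weight}) or through a convex quadratic subproblem for $\mathbf{w}_{ij}$ \cite{boyd2004convex} --- with the primal-dual update of $(\tau_{0j},\lambda_{0j})$. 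For frozen $\lambda$ this is exactly the classical WMMSE block-coordinate-descent (BCD), whose objective is non-increasing, bounded below, and admits unique block minima, so any limit point is a stationary point of (\ref{EHM7}), and hence of the weighted sum-rate problem.

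Combining these ingredients at a limit point $(\mathbf{w}^{\infty},\tau^{\infty},\lambda^{\infty})$, the Lagrangian stationarity in $\mathbf{w}$ inherited from the weighted sum-rate problem, together with $\mu_{0j}^{1}=\lambda_{0j}^{\infty}$, $\mu_{0j}^{2}=1-\lambda_{0j}^{\infty}$, and the complementary slackness built into (\ref{weight}), form exactly the KKT system of the smooth reformulation of (\ref{EHM6}), so $\mathbf{w}^{\infty}$ is a stationary point of (\ref{EHM6}). The hard part, in my view, is ensuring that the sum-rate of (\ref{EHM6}) is monotone non-decreasing across outer iterations despite the discrete flips of $\lambda_{0j}$. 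I would handle this by observing that any flip of $\lambda_{0j}$ can only occur when the iterate crosses the boundary $R_{0j}^{HAPS\_RF}=R^{FSO}$, along which both candidate values of the min agree and the objective is continuous; combined with the monotone weighted-sum-rate increase of each inner WMMSE cycle and the boundedness forced by the power constraints, this yields convergence of the sum-rate sequence to the value attained at a stationary point of (\ref{EHM6}).
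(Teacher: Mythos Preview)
Your argument is correct and shares the same backbone as the paper's proof: reduce to the WMMSE equivalence of \cite{shi2011iteratively}, invoke block-coordinate-descent convergence on (\ref{EHM7}), and then argue that stationary points of (\ref{EHM7}) coincide with those of (\ref{EHM6}). The main difference is in how the non-smooth $\min\{\cdot,\cdot\}$ term is handled. The paper works directly: it rewrites $R_{ij}^{RF}=\log\det\big((\mathbf{e}_{ij}^{mmse})^{-1}\big)$, then shows the componentwise identity $\partial f_{2}(\boldsymbol{\rho}^{*},\mathbf{u}^{*},\mathbf{w}^{*})/\partial\mathbf{w}_{ij,r}=\partial f_{1}(\mathbf{w}^{*})/\partial\mathbf{w}_{ij,r}$ by a two-case computation of $\lambda_{0j}\,\partial\tau_{0j}/\partial\mathbf{w}_{0j,r}$ depending on whether $\tau_{0j}=R_{0j}^{HAPS\_RF}$ or $\tau_{0j}=R^{FSO}$. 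You instead pass through an epigraph reformulation and interpret $\lambda_{0j}$ as the KKT multiplier $\mu_{0j}^{1}$ of the constraint $\tau_{0j}\le R_{0j}^{HAPS\_RF}$, with $\mu_{0j}^{1}+\mu_{0j}^{2}=1$ coming from stationarity in $\tau_{0j}$. Your framing has the advantage of making the role of (\ref{weight}) transparent as complementary slackness, and of explicitly addressing the monotonicity of the sum-rate across the outer updates of $\lambda_{0j}$ --- a point the paper's appendix does not spell out. The paper's approach, on the other hand, is shorter and avoids introducing the auxiliary constrained problem.
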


\begin{proof}
The steps of the proof of Lemma 1 are included in Appendix \ref{ap1} of the paper.
\end{proof}

\subsection{Overall Algorithm and Convergence}
Now that both the discrete and continuous variables of problem (\ref{EHM}) are determined, as per Algorithms 1 and 2, respectively, the paper adopts an iterative algorithm to optimize both variables alternatively. Specifically, the solution involves three loops: two inner loops and one outer loop. The first inner loop solves the user association strategy, and the second inner loop updates the beamforming vectors at HAPS and ground BSs. The outer loop, finally, combines two inner loops to optimize the user association and beamforming. Since each of the two loops provides a nondecreasing function in the network sum-rate (which is bounded by the network capacity), the overall algorithm is guaranteed to converge, as also validated later through the paper simulations. The steps of the overall algorithm are shown in Algorithm \ref{G3} description below.
\begin{algorithm}[h!]
 \caption{Overall Algorithm}
 \label{G3}
 \begin{enumerate}
  \item Generate initial beamforming vectors $(\mathbf{w}_{ij})$.
  \item Repeat
  \item Fix the beamforming vector of BSs and HAPS.
  \item Implement Algorithm 1 to update the $\alpha_{ij}$.
  \item Implement Algorithm 2 to update the beamforming vectors of all users.
  \item Compute the sum-rates $R_{sum}$ of the network.
  \item Stop at convergence.
 \end{enumerate}
 \end{algorithm}
\subsection{Computational Complexity}
To best characterize the computational complexity of the proposed algorithm, we note the Algorithm \ref{G3} solves two problems sequentially. The user association problem can be solved by an integer linear program and a GAP-based algorithm. The linear programming has a polynomial-time solvable computational complexity $O(n^{k})$, where $n$ is the number of variables $\alpha_{ij}$ and $k$ is the degree of complexity. Our paper adopts a GAP-solver based on branch-and-bound (BnB) solutions \cite{douik2020tutorial}, and so the GAP step complexity is in the order of $O(\xi^{n})$, where $1< \xi <2$. The beamforming solution, on the other hand, relies on WMMSE \cite{shi2011iteratively}, the per-iteration computational complexity of which is upper-bounded by $O(N_{U}^{2}N_{A}+N_{U}^{2}N_{A}^{2}+N_{U}^{2}N_{A}^{3}+N_{U})$, where $N_{A}$ is the maximum number of antennas across both HAPS and ground BSs.

To summarize, our proposed algorithm includes three loops. When the numbers of iterations of the two inner loops are $T_{1}$ and $T_{2}$, respectively, and that of the outer loop is $T_{3}$, the computational complexity of the overall algorithm becomes $T_{3}[O(n^{k})+T_{1}O(\xi^{n})+T_{2}O(N_{U}^{2}N_{A}+N_{U}^{2}N_{A}^{2}+N_{U}^{2}N_{A}^{3}+N_{U})]$, which is reasonably dependable on the particular GAP solution complexity.

\subsection{Baseline Approaches}
As mentioned above, the major complexity of the proposed solution originates from the user association strategy, especially the GAP algorithm. To this end, the paper now presents two alternative low complexity methods that depend on the distance and channel values, respectively. Similar approach can be found in \cite{dahrouj2015distributed}.
\subsubsection{Baseline 1 (Distance dependent approach)}
This method assigns user $j$ to {transmitter} $i$ ($\forall i\in \mathcal{I}$ and $j\in \mathcal{U}$) based on their mutual distance, denoted by $d_{ij}$. Let $\mathbf{D}$ be the $(N_{B}+1) \times N_{U}$ matrix whose entries, $d_{ij}$, denote the distance between the {transmitter} $i$ and the user $j$, i.e., the $(i,j)^{th}$ entry of the matrix $\mathbf{D}$ is $\mathbf{D}_{i,j}=d_{ij}$. At each step, find the smallest entry of matrix $\mathbf{D}$, call it $\mathbf{D}_{i^{min},j^{min}}$. User $j^{min}$ then maps to {transmitter} $i^{min}$, as long as
{each ground BS does not serve more than its number of antennas and that the HAPS payload connectivity constraint is satisfied.
} Next, delete the $(\mathbf{D}_{j^{min}})^{th}$ column of the matrix, so that user $j^{min}$ cannot be associated with other {transmitters} in subsequent steps. Repeat the above procedure until all users are connected to {transmitters} or all {transmitters}
 {resource constraints (\ref{EHMe}), (\ref{EHMf}) are violated.}
\subsubsection{Baseline 2 (Channel dependent approach)}
Unlike the distance dependent approach,
this method assigns users to the {transmitters} based on the channel gain between the {transmitters} and the users, denoted by $c_{ij}=||\mathbf{h}_{ij}||_2^2$, $\forall i\in \mathcal{I}$ and $j\in \mathcal{U}$. Let $\mathbf{C}$ be the $(N_{B}+1) \times N_{U}$ matrix whose entries are the channel gains between {transmitter} $i$ and user $j$ denoted by $c_{ij}$, i.e., the $(i,j)^{th}$ entry of the matrix $\mathbf{C}$ is $\mathbf{C}_{i,j}=c_{ij}$.
At each step, find the largest entry of matrix $\mathbf{C}$, call it $\mathbf{C}_{i^{max},j^{max}}$. User $j^{max}$ then maps to {transmitter} $i^{max}$, as long as the resource constraints of {transmitter} $i^{max}$ are satisfied. Next, delete the $(\mathbf{C}_{j^{max}})^{th}$  column of the matrix, so that user $j^{max}$ cannot be associated with other {transmitters} in subsequent steps. The procedure then gets repeated as in the distance dependent approach above.

\section{Simulation Results}
This section evaluates the performance of the proposed 
algorithm for various networks scenarios, so as to illustrate the numerical gains of the developed joint user association and beamforming optimization framework in the context of integrated satellite-HAPS-ground networks. The paper particularly compares the proposed joint optimization solution adopting ILP and GAP (IG) as user association strategy and WMMSE as beamforming in high backhaul capacity (HBC-IG-WMMSE) to 8 different benchmarks:
1- joint optimization with channel-dependent (CD) and WMMSE in high backhaul capacity (HBC-CD-WMMSE),
2- joint optimization with distance-dependent (DD) and WMMSE in high backhaul capacity (HBC-DD-WMMSE),
3- joint optimization solution {with ILP-GAP and WMMSE} in low backhaul capacity (LBC-IG-WMMSE),
4- joint optimization with channel-dependent and WMMSE in low backhaul capacity (LBC-CD-WMMSE),
5- joint optimization with distance-dependent and WMMSE in low backhaul capacity (LBC-DD-WMMSE),
6- ILP-GAP approach in high backhaul capacity (HBC-IG),
7- channel-dependent approach in high backhaul capacity (HBC-CD),
and
8- distance-dependent approach in high backhaul capacity (HBC-DD).
For completeness, we also summarize the above algorithms in Table \ref{table_II}.

\begin{table*}[!t]
\centering
\caption{Algorithms Abbreviation}
\label{table_II}
\begin{tabular}{|p{.25\textwidth} | p{.7\textwidth} | }
\hline
  \textbf{Algorithm} &   \textbf{Definition} \\
 \hline
 HBC-IG-WMMSE &  Joint optimization with ILP-GAP (IG) and WMMSE in high backhaul capacity\\
  \hline
HBC-CD-WMMSE & Joint optimization with channel-dependent (CD) and WMMSE in high backhaul capacity \\
 \hline
HBC-DD-WMMSE & Joint optimization with distance-dependent (CD) and WMMSE in high backhaul capacity \\
 \hline
LBC-IG-WMMSE &Joint optimization with ILP-GAP (IG) and WMMSE in low backhaul capacity \\
 \hline
LBC-CD-WMMSE & Joint optimization with channel-dependent (CD) and WMMSE in low backhaul capacity \\
 \hline
LBC-DD-WMMSE & Joint optimization with distance-dependent (CD) and WMMSE in low backhaul capacity \\
 \hline
HBC-IG & ILP-GAP (IG) approach in high backhaul capacity\\
\hline
HBC-CD & Channel-dependent (CD) approach in high backhaul capacity \\
 \hline
HBC-DD & Distance-dependent (DD) approach in high backhaul capacity \\
\hline
\end{tabular}
\end{table*}


We first simulate a network of medium size with ground footprint
of $5$ km $\times$ $5$ km, similar to Fig. \ref{SM}. We herein assume that $N_{U}$ users are distributed in two different subareas. Subarea $1$ contains $12$ BSs with coordinates: x: ($0$ km to $1$ km) and y: ($0$ km to $1$ km) and contains $60\%$ of the total number of users. The remaining area is the subarea $2$ and contains $40\%$ of the total number of users, with no deployed BS. In this case, subarea 1 can be considered as an urban area, while subarea 2 can be considered as a suburban area. The satellite is fixed at the coordinates $[2.5, 2.5, 36000]$ km. The HAPS is also fixed at the coordinates $[2.5, 2.5, 18]$ km. Table \ref{tableIII} presents the values of the
parameters used in the simulations (unless mentioned otherwise); the FSO-related parameters are adopted from \cite{alzenad2018fso}. For illustration purposes, the data-availability variables $\gamma_{ij}$ are set to 1 throughout the simulations section. {Further, for illustration purposes, we set the maximum number of users that the HAPS can serve to the number of its corresponding antennas (i.e., $K_{0}=N_{A}^{0}$).}

\begin{table*}[!t]
\centering
\caption{Simulation Parameters}
\label{tableIII}
\begin{tabular}{|p{.45\textwidth} | p{.2\textwidth} |}
\hline
Parameter Name& Parameter Value\\
\hline
  The height of HAPS, $z^{HAPS}$ & $18$ km  \\
  The height of geo-satellite, $z^{satellite}$ & $36000$ km\\
  Bandwidth of BSs, $\beta$ & $10$ MHz  \\
  Central carry frequency, $f_{c}$ & $3$ GHz \\
  Rician factor, $\kappa_{HAPS}$ & $5$\  \\
  Noise power, $N_{0}$& $-174$ dBm/Hz\  \\
  Ground base-station antenna, $N_{A}^{BS}$& $1$ \\
  HAPS antennas in medium network, $N_{A}^{HAPS\_mid}$& $20$\\
  HAPS antennas in large network, $N_{A}^{HAPS\_big}$& $40$\\
  Maximum power of urban BS, $P_{BS}^{max\_urban}$& $1$ watt\  \\
  Maximum power of suburban BS, $P_{BS}^{max\_suburban}$& $2$ watt\  \\
  Maximum power of rural BS, $P_{BS}^{max\_rural}$& $5$ watt\  \\
  Maximum power of HAPS in mid network, $P_{HAPS}^{max\_mid}$& $100$ watt\\
  Maximum power of HAPS in big network, $P_{HAPS}^{max\_big}$& $200$ watt\\
  Standard deviation of ground-level shadowing $\sigma_{a}$& $5$dB \ \\
  \hline
\end{tabular}
\end{table*}

\begin{figure}[!t]
\centering
\includegraphics[width=3in]{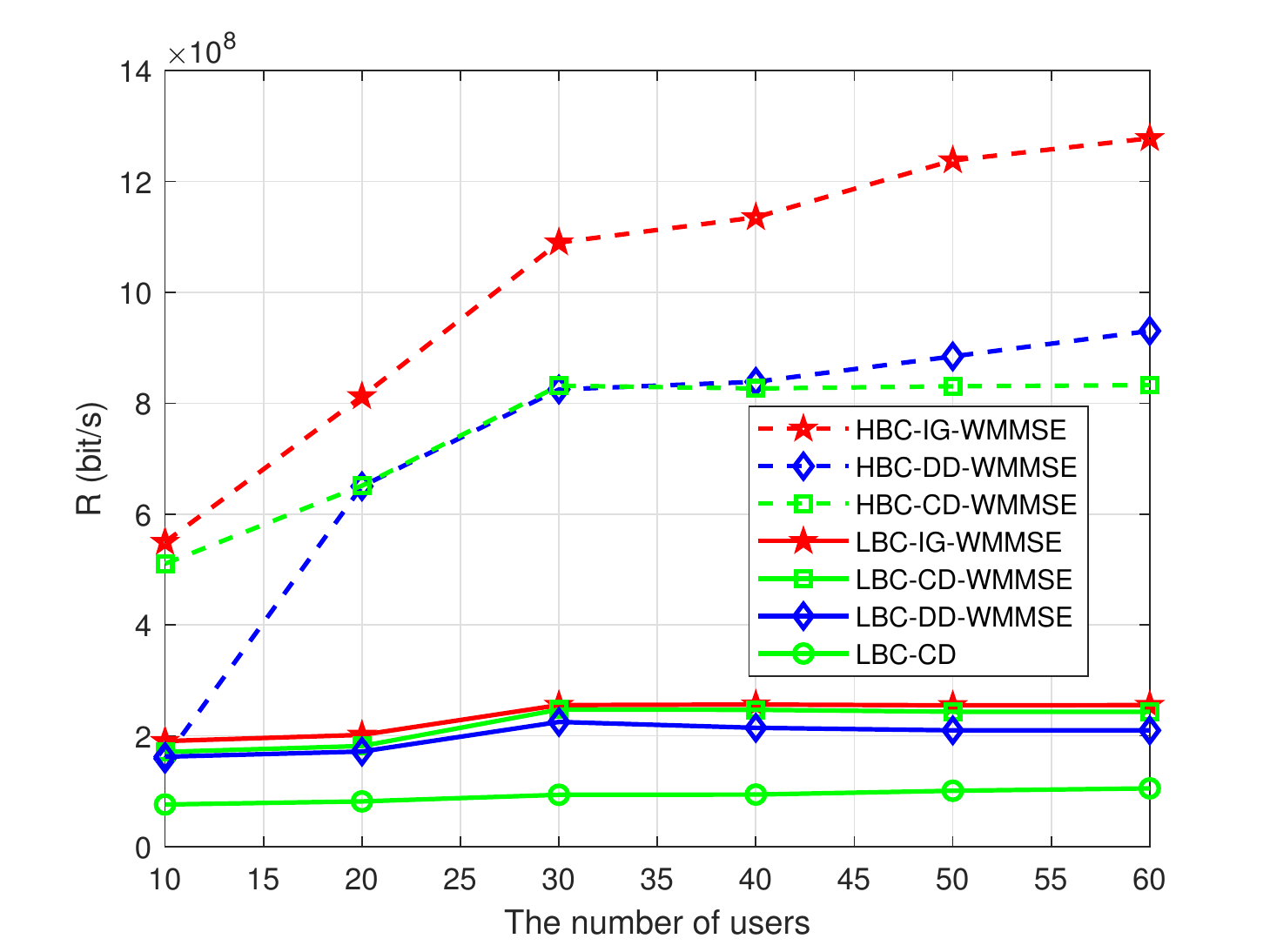}
\caption{Sum-rate versus the total number of users.}
\label{miduser}
\end{figure}

\begin{figure}[!t]
\centering
\includegraphics[width=3in]{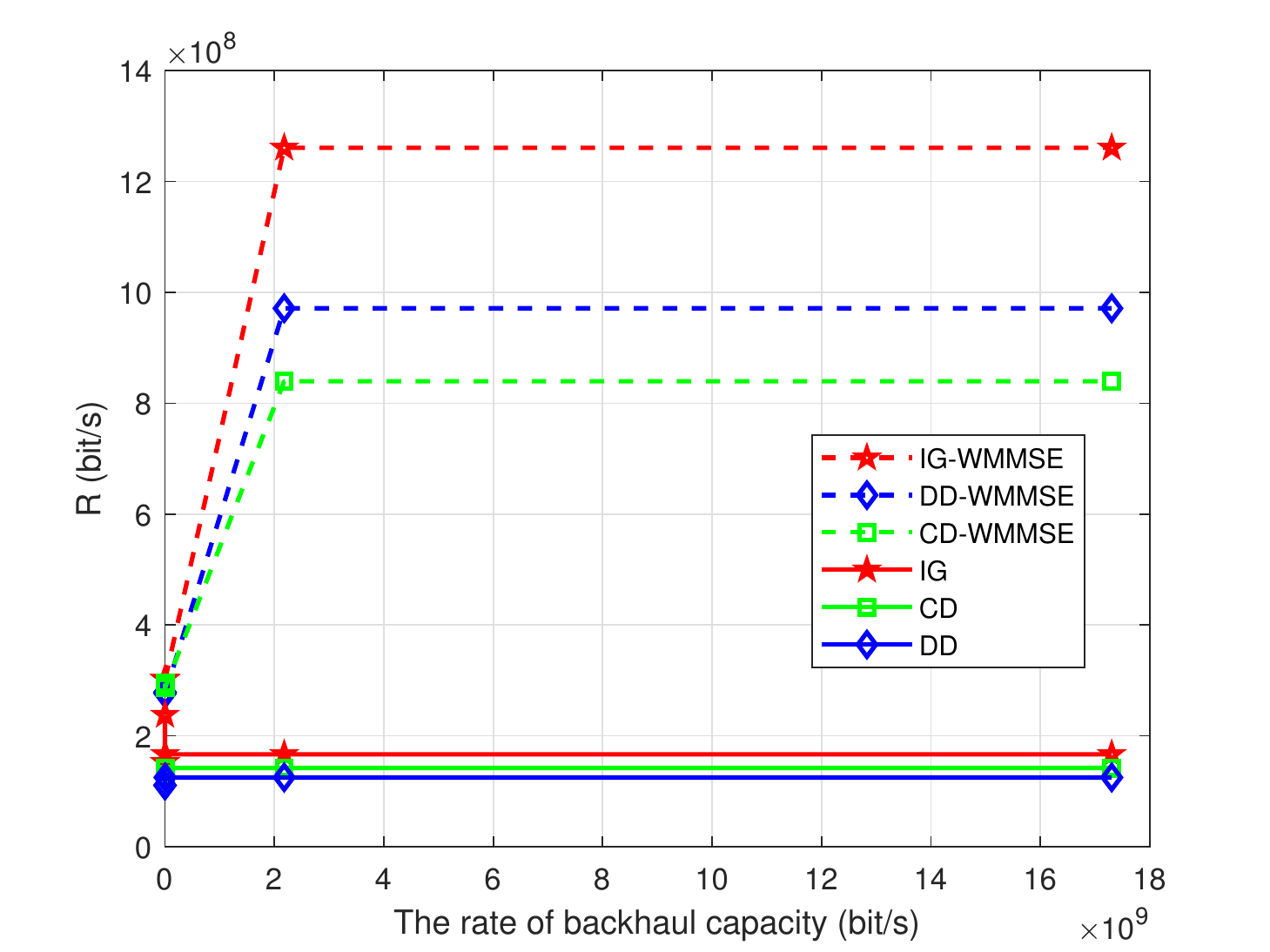}
\caption{Sum-rate versus the backhaul capacity.}
\label{midfso}
\end{figure}

We first illustrate the impact of the number of users on the network performance in Fig. \ref{miduser}, which plots the sum-rate versus the total number of users. Fig. \ref{miduser} shows how the proposed solution provides a substantial gain compared to other algorithms, especially when the number of users increases. This is particularly the case since the interference level becomes larger in denser networks, and so the role of the proposed resource allocation scheme in mitigating interference becomes more pronounced.
Fig. \ref{miduser} also shows how the joint optimization attains a considerable improvement at high backhaul capacity.
This is due to the fact that, unlike low backhaul capacity regimes, high backhaul capacity regimes match a fully enabled HAPS, which unleashes the full power of the HAPS towards serving more users, thereby increasing the network total throughput.

Judging from Fig. \ref{miduser}, when the number of users is $50$, one can notice that the network resources are fully utilized. Therefore, we now use $N_{U}=50$
so as to better characterize the impact
of backhaul capacity on the network sum-rate performance, by plotting the sum-rate versus the backhaul capacity in Fig. \ref{midfso}. The figure shows that our proposed solution always attains the highest sum-rate as compared to all other classical strategies. The figure also illustrates the significant gap between joint optimization and user association strategy, which highlights the importance of the beamforming step at mitigating the interference, i.e., beyond the initial user association step.
{When the backhaul capacity exceeds the data rate of the RF link, the $\min\{.,.\}$ term in (\ref{R_HAPS}) becomes equal to the data rate of the RF link. This is the reason why the sum-rate becomes invariant at the high backhaul regime. }
We also can notice that the proposed user association strategy does well in the lowest backhaul capacity. Due to the FSO backhaul constraint (\ref{R_HAPS}), when the data rate of the FSO is as low as 0, the user can reasonably not choose the HAPS to connect to, which is depicted through the behavior of the ILP-GAP algorithm. However, the users association strategies of the two baseline algorithms, i.e., DD and CD, are only based on distance and channel gains, and so users may still choose to associate to HAPS at low backhaul capacity, which introduces high interference to the RF network and, at the same time, exacerbates the network sum-rate.

\begin{figure}[!t]
\centering
\includegraphics[width=3in]{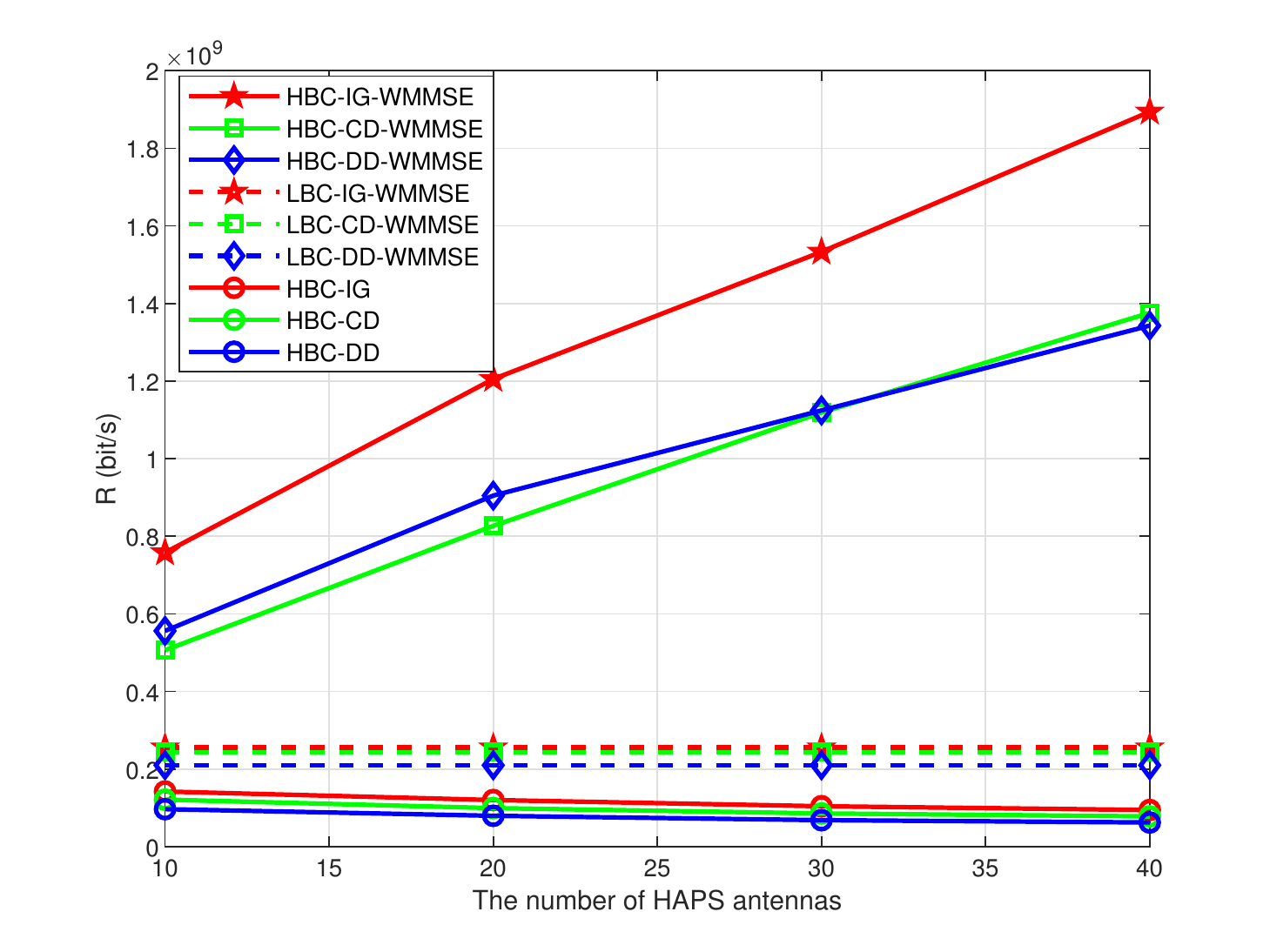}
\caption{Sum-rate versus the total number of HAPS antennas.}
\label{midAN}
\end{figure}

To illustrate the impact of the HAPS antennas on the network perfromance, Fig. \ref{midAN} plots the sum-rate versus the number of HAPS antennas {with $N_{U}=50$}. 
The figure shows that,
as the number of antennas increases, {the sum-rate resulting from} the solutions that rely on user association only slightly decreases, while {the sum-rate resulting from} the solutions which implement the additional beamforming optimization step increases. This is because if the HAPS has more antennas, the HAPS can serve more users, which introduces more interference. Since user association strategy can not alone reduce the high level of interference stemming from the HAPS newly deployed antennas, the sum-rate slightly decreases. However, the additional beamforming optimization step can significantly mitigate the interference due to the empowered spatial multiplexing capabilities.
Fig. \ref{midAN} further shows that the proposed joint approach always outperforms all other baseline solutions for all the simulated scenarios. The figure particularly shows the gain harvested through augmenting the ground networks with HAPS capabilities, which is shown through the substantial gain at high backhaul capacity (i.e., when the rate of the HAPS to ground users RF link is inferior to the FSO link capacity) as compared to the low backhaul capacity (i.e., when the HAPS potential is rather limited by the FSO link).
In fact, at the low backhaul capacity, increasing the number of antennas does not change the sum-rate as the HAPS remains idle in this case. On the opposite, at the high backhaul capacity regime, the active operation of the HAPS becomes a major driver in pushing the network throughput upward.

\begin{figure}[!t]
\centering
\includegraphics[width=3in]{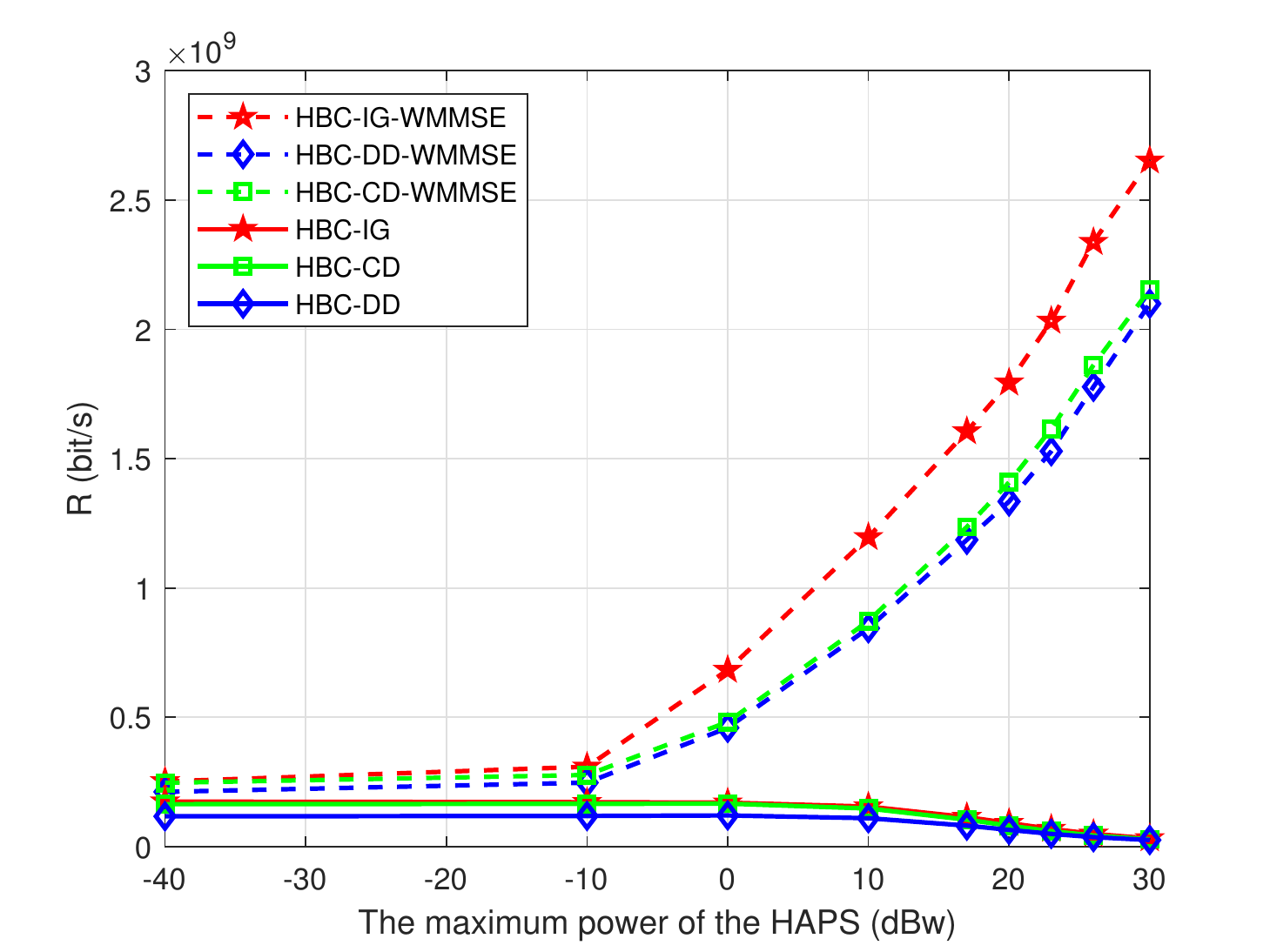}
\caption{Sum-rate versus maximum power of HAPS.}
\label{midHAPS}
\end{figure}

Fig. \ref{midHAPS} shows the sum-rate of the network versus the maximum power of the HAPS, when the number of HAPS antennas is set to $N_{A}^{0}=40$ {and the number of users is $50$}. Not only does Fig. \ref{midHAPS} reaffirm the superiority of our proposed joint optimization, but the figure also illustrates how the additional power capability at the HAPS helps increasing the network sum-rate.
This is because when the maximum power of HAPS increases, the number of users served by the HAPS increases. Given the strong capability of the proposed scheme at mitigating the cross-mode cross-layered interference, the sum-rate does indeed increase with the joint optimization scheme. Such result is further highlighted by depicting the fraction of users served by the HAPS out of the total number of users (denoted by $\delta$) in the high backhaul capacity as shown in Fig. \ref{delta}. The figure shows that as the maximum power of HAPS increases, the number of users served by HAPS increases, which reflects the ability of higher power HAPS to serve more users.
The figure further shows that the greater the ground-level shadowing is, the more users the HAPS would serve. This is because when the ground-level shadowing increases, the gain brought by the base-station connecting the user to the network decreases, and so the users are more inclined to be served by the HAPS.
Likewise, if the HAPS is equipped with more antennas, more users tend to be served by the HAPS, which explains the capacity boost when the HAPS has $40$ antennas. Fig. \ref{delta} is indeed a crisp illustration of how HAPS help serving users in both urban and suburban areas; thereby highlighting HAPS roles in connecting the unconnected (through strong HAPS capabilities), and super-connecting the connected (at higher interference levels).

\begin{figure}[!t]
\centering
\includegraphics[width=3in]{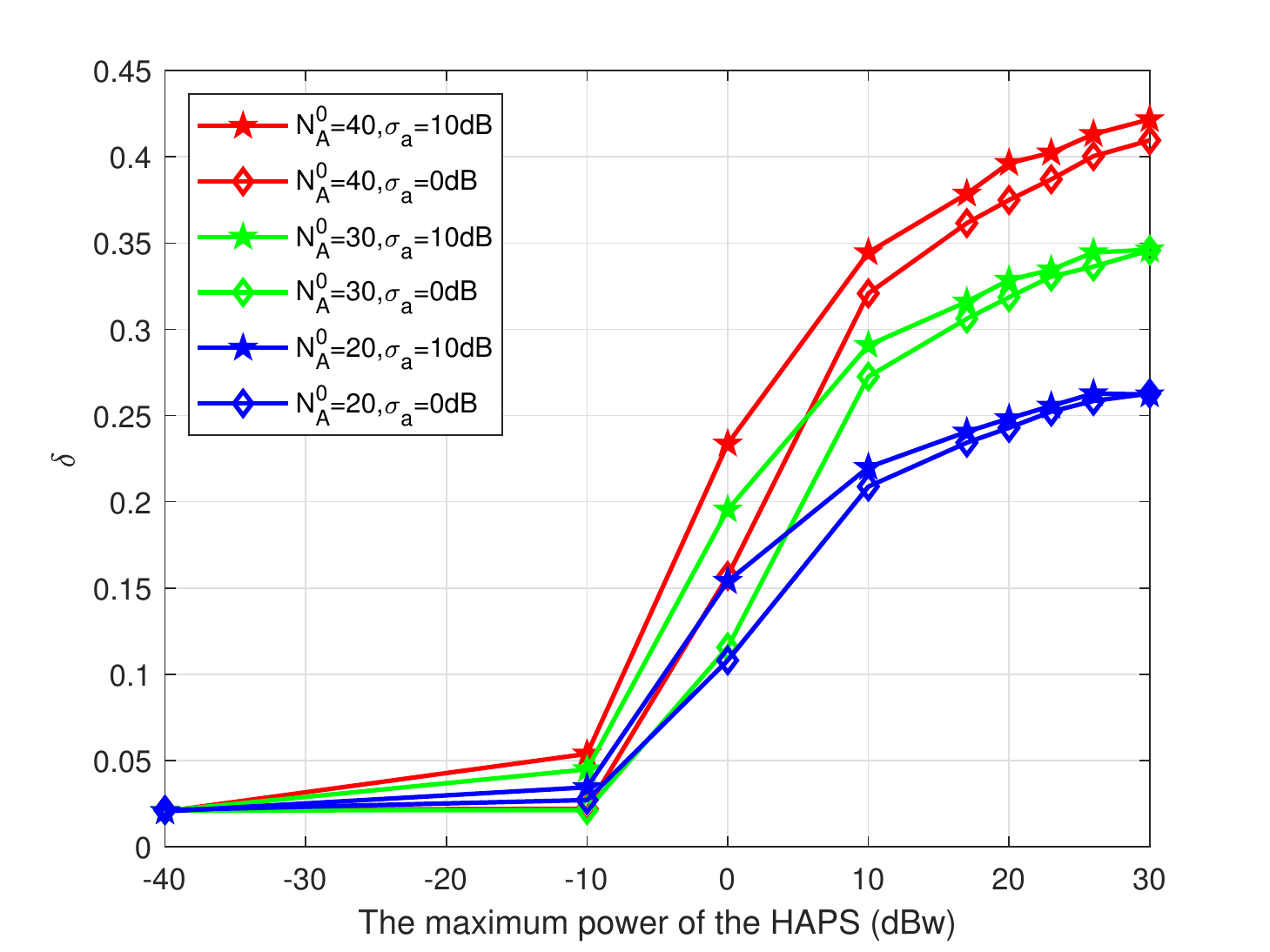}
\caption{Delta versus maximum power of HAPS.}
\label{delta}
\end{figure}

\begin{figure}[!t]
\centering
\includegraphics[width=3in]{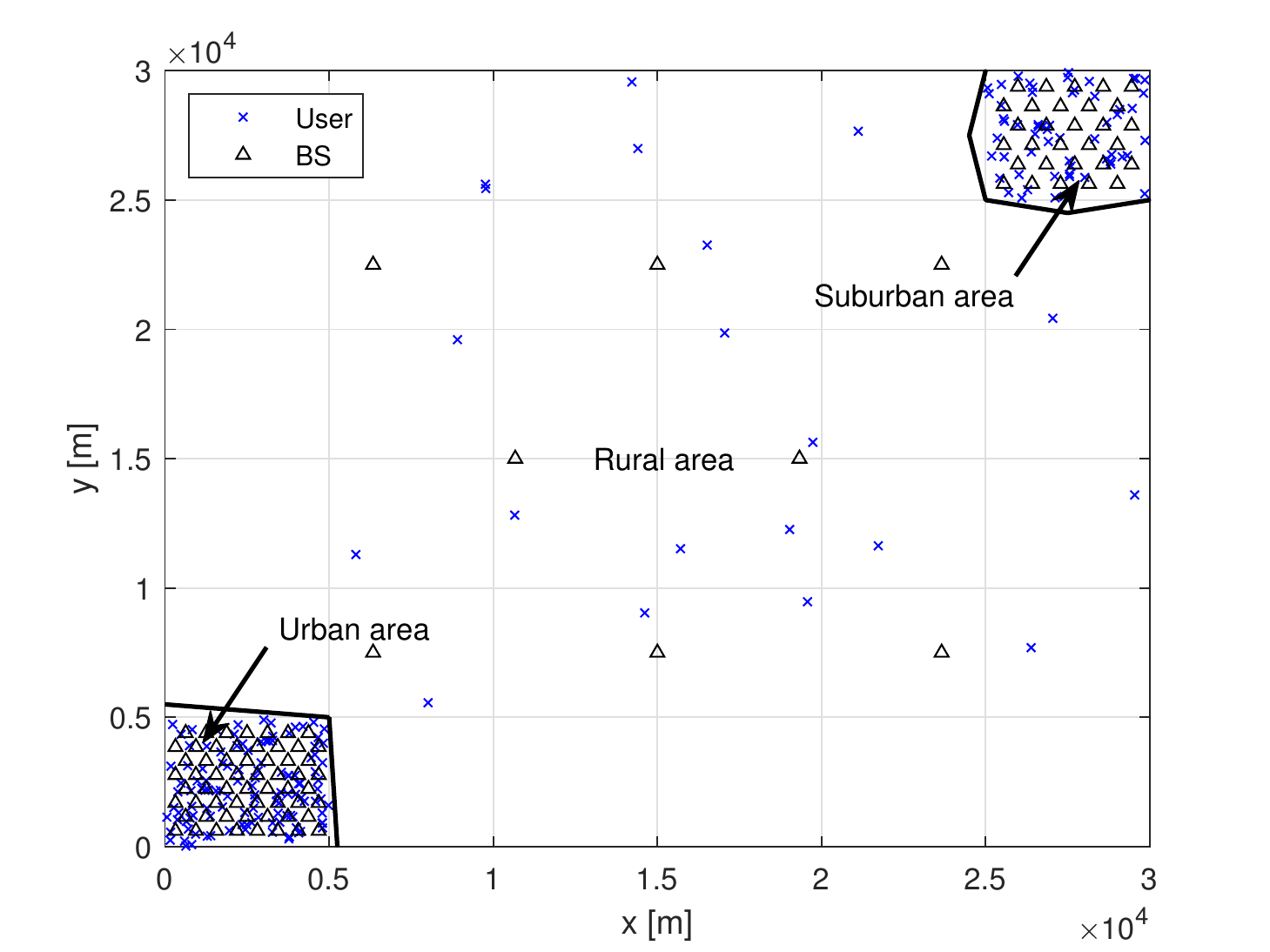}
\caption{The layout of the big network.}
\label{bignet}
\end{figure}

\begin{figure}[!t]
\centering
\includegraphics[width=3in]{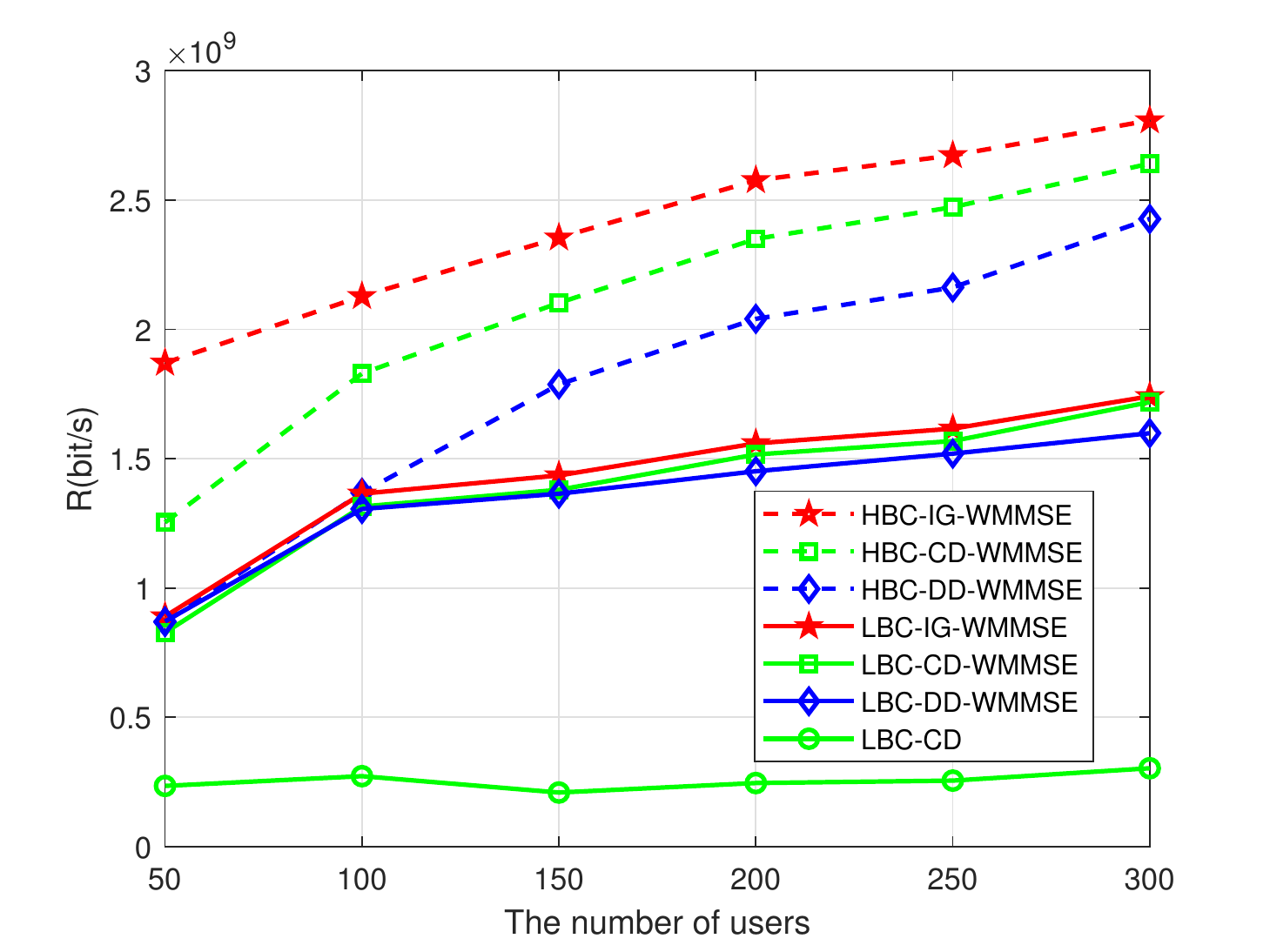}
\caption{Sum-rate versus users.}
\label{biguser}
\end{figure}

We next simulate a network of large size with a ground footprint of $30$ km $\times$ $30$ km as shown in Fig. \ref{bignet}. In this area, $N_{U}$
users are distributed in three different subareas. Subarea $1$ contains 60 BSs with coordinates: x: ($0$ km to $5$ km) and y: ($0$ km to $5$ km) and contains $60\%$ of the total number of users. Subarea $2$ contains 30 BSs with coordinates: x: ($25$ km to $30$ km) and y: ($25$ km to $30$ km) and contains $30\%$ of the total number of users. The remaining area is the subarea $3$ and contains $10\%$ of the total number of users and $8$ BSs. In this case, subarea $1$ can be considered as an urban area. While subarea $2$ can be considered as a suburban area, subarea $3$ can be considered as a rural area.
The geo-satellite and HAPS locations remain {at the center of the network} as before.


Fig. \ref{biguser} shows the sum-rate versus the total number of users. It is observed that the proposed solution outperforms all other approaches, especially when the number of users increases. Fig. \ref{biguser} also shows that joint optimization methods are superior to those algorithms with only user association. Note that the utility of the network in the high backhaul capacity is better than the low backhaul capacity, which further indicates the positive impact of HAPS on large networks throughput.
\begin{figure}[!t]
\centering
\includegraphics[width=3in]{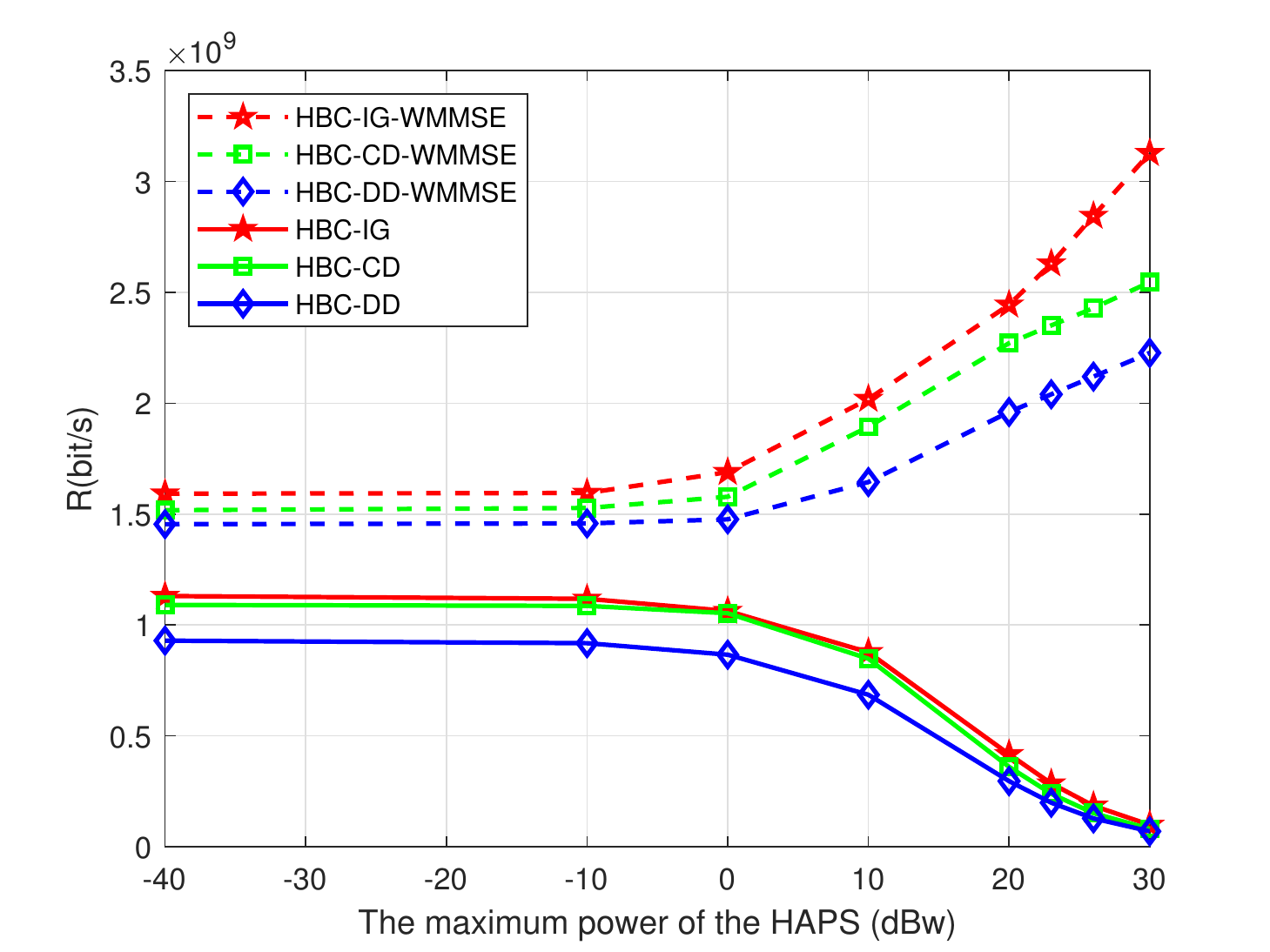}
\caption{Sum-rate versus maximum power of HAPS.}
\label{bigHAPS}
\end{figure}
Fig. \ref{bigHAPS} shows the sum-rate versus the maximum power of HAPS {with $N_{U}=200$}.
Similar to Fig. \ref{midHAPS}, Fig. \ref{bigHAPS} shows that when the maximum power of the HAPS increases, {the sum-rate resulting from} the joint optimization increases, while {the sum-rate resulting from} user association only decreases. It is particularly noticeable how the proposed algorithm can bring more significant improvement to large networks than medium networks, mainly due to the higher interference levels. In fact, when the maximum power of HAPS is $30$dBw, Fig. \ref{midHAPS} shows that the proposed algorithm can improve the network performance by $22.7\%$. For large networks, the network performance can be improved by $25.5\%$, which indicates that the proposed algorithm can improve the network performance, particularly in ultra-dense networks.
\begin{figure}[!t]
\begin{center}
\subfigure[Maximum power of HAPS is -40dBw]{
\includegraphics[width=2.5in]{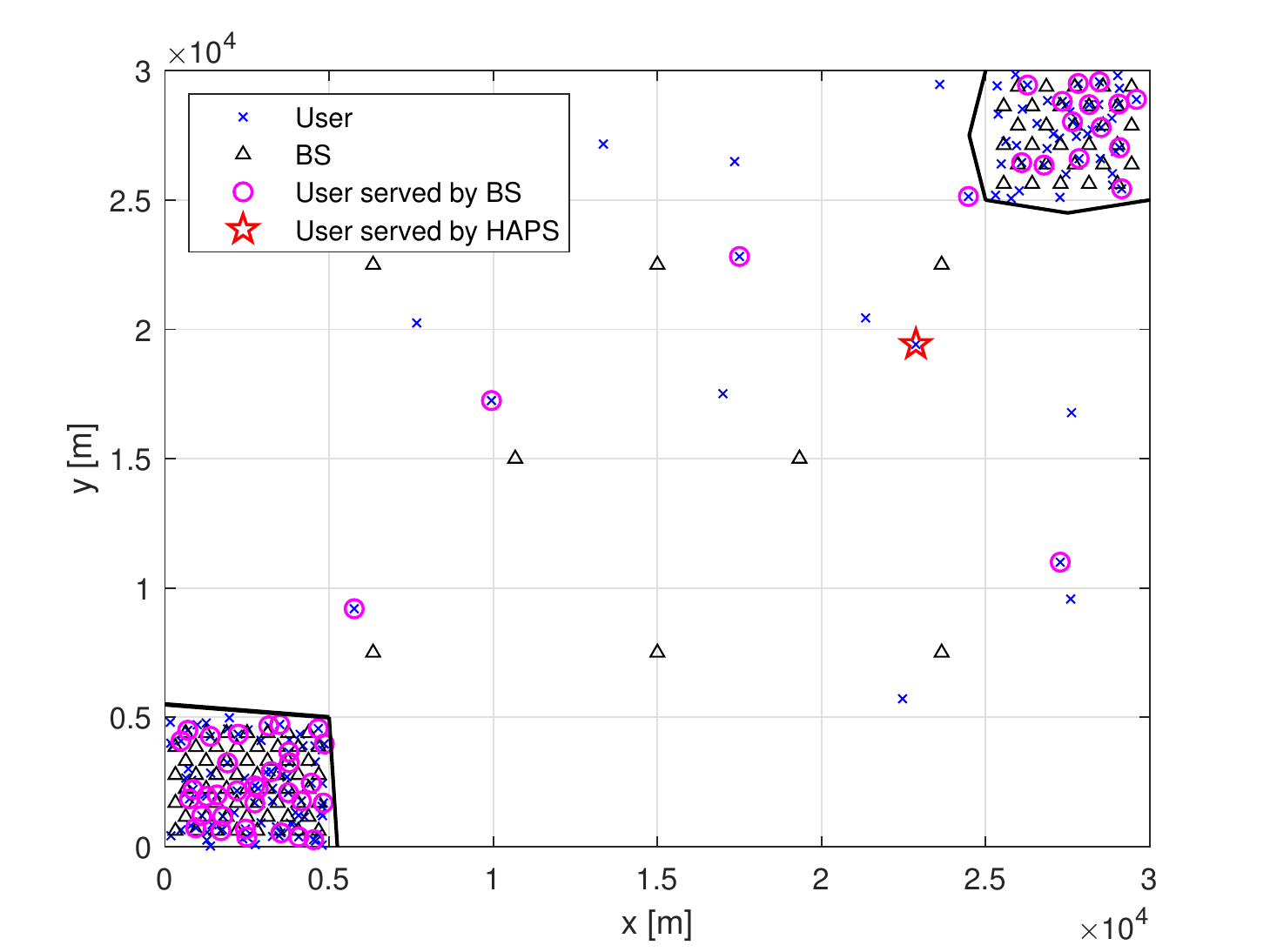}
\label{power_-40dBw}
}
\subfigure[Maximum power of HAPS is 10dBw]{
\includegraphics[width=2.5in]{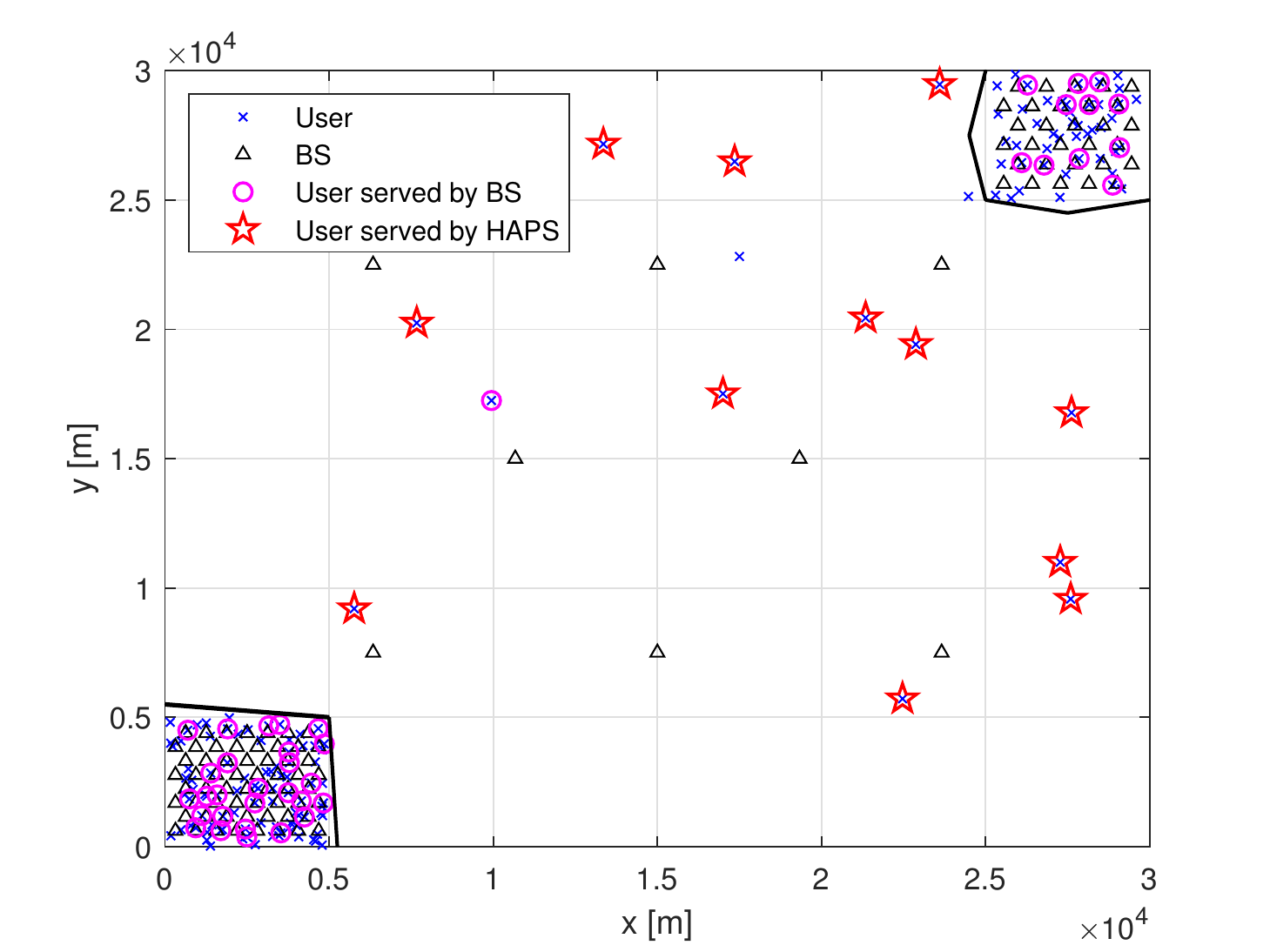}
\label{power_10dBw}
}
\subfigure[Maximum power of HAPS is 30dBw]{
\includegraphics[width=2.5in]{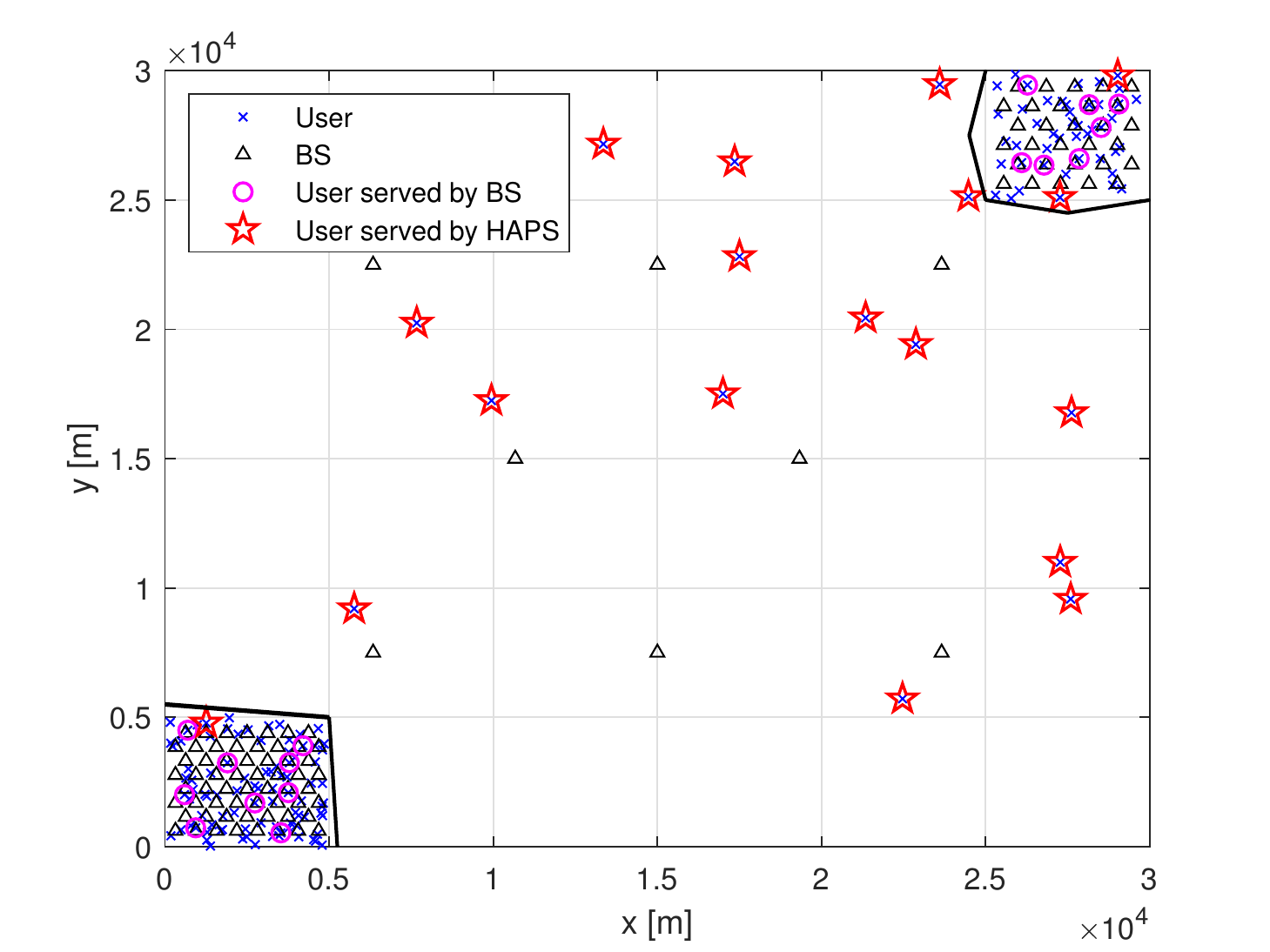}
\label{power_30dBw}
}
\caption{The behavior of user to HAPS and ground BSs association for different HAPS power levels.}
\label{how users are served}
\end{center}
\end{figure}

Fig. \ref{how users are served} shows a graphical illustration of how ground users association changes as the maximum power of the HAPS increases. The figures show that when the power of HAPS is very low, users in urban and suburban areas tend to be served by BSs in Fig. \ref{power_-40dBw}, while HAPS serves only one user in the rural area. As the power of HAPS increases to $10$dBw in Fig.~\ref{power_10dBw}, HAPS starts serving more users in the rural area, which shows how HAPS help connecting the unconnected. As the power increases further in Fig.~\ref{power_30dBw}, the HAPS starts further serving more users from within the urban area, which is an example of how HAPS ultra-connects the connected.

\begin{figure}[!t]
\centering
\includegraphics[width=3in]{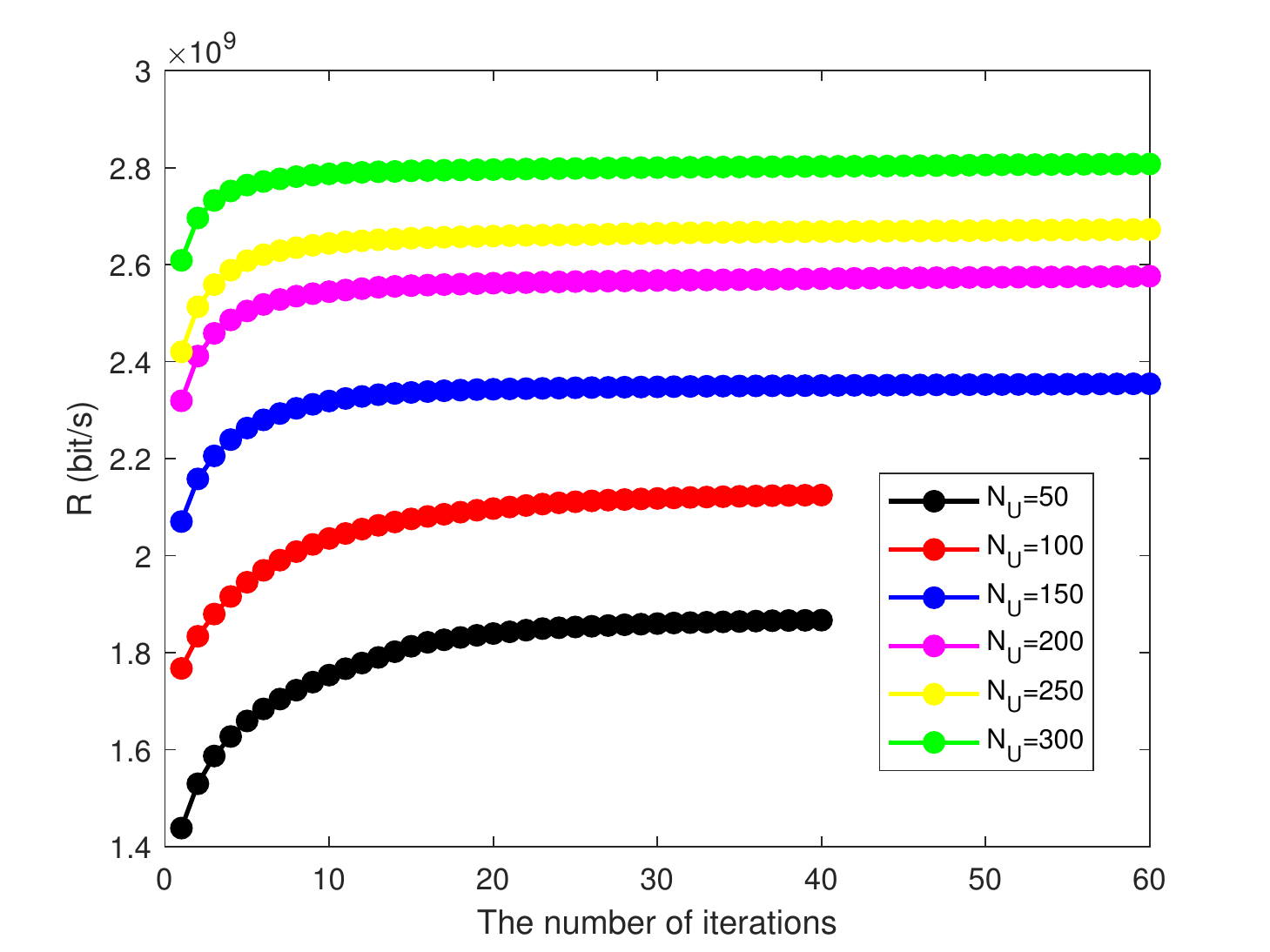}
\caption{Convergence behavior of the proposed algorithm for different numbers of users.}
\label{bigiter}
\end{figure}
Lastly, we show the convergence of our proposed algorithm through Fig. \ref{bigiter}, which plots the sum-rate behavior versus the number of iterations.
Each iteration in Fig. \ref{bigiter} consists of two inner iterations.
The figure illustrates the convergence of the proposed algorithm for various values of the number of users, which further validates the algorithmic convergence discussed earlier in the paper.
The figure particularly shows how the convergence of the overall algorithm is reasonably fast for different numbers of users, which reemphasizes the numerical prospects of the proposed algorithm in the context of our paper.

\subsection{Discussion and Recommendation}
Based on the above results, it can be seen that our proposed algorithm provides a superior sum-rate performance as compared to conventional baselines. This is particularly true at the high FSO backhaul capacity regime, and under strong HAPS capabilities (e.g., high transmit power, large number of antennas), where the HAPS offers the prospects of assisting both the unconnected and the connected; thereby improving both rural and metropolitan networks performance. A major sailing outlook of the current paper is that the considered high-speed FSO backhaul link extends from the geo-layer to the stratosphere without going through the troposphere, which makes it less vulnerable to general weather conditions. Such appealing features of the adopted FSO path, together with the general advent in HAPS design \cite{kurt2021vision}, make the paper results of particular importance in high-demand communication scenarios, e.g., concerts, sports events, etc., as well in the general context of boosting ground level future 6G networks performance.

\section{Conclusions and Future Work}
Digital inclusion is nowadays celebrated as one of the major drivers towards defining 6G communications networks sustainable architectures. Along this direction, this paper proposes an integrated satellite-HAPS-ground network consisting of one geo-satellite, one HAPS, and several terrestrial BSs that collaboratively aim at connecting the unconnected and ultra-connecting the connected. The paper focuses on maximizing the network-wide sum-rate utility, subject to {HAPS payload connectivity constraint,}
HAPS and BSs transmit power constraints, and FSO backhaul constraints, so as to jointly determine the user-association strategy of each user, and the beamforming vectors at the HAPS and BSs.
The paper tackles such a complex mixed discrete-continuous optimization problem using an iterative approach, where the user association is determined using a combination of linear integer programming and generalized assignment problems, and where the beamforming strategy is found using a WMMSE approach. The paper results illustrate the appreciable gain of our proposed algorithm, and particularly highlight the numerical prospects of augmenting the ground networks with HAPS for connecting the unconnected (through strong HAPS capabilities), and super-connecting the connected (at the high interference regime), which give promising performance projections about vertical heterogenous networks prospects. Future research directions in the field include accounting for imperfect channel information at both the HAPS and the ground base-stations, investigating the multi-HAPS scenario, and evaluating the data-driven optimization approach to solve the problem under consideration in an online fashion. Other research extensions of the current work would also include optimizing the end-to-end delay of the system by factoring in the impact of the gateway to geo-satellite feeder link latency, which promises to be a promising future research direction that falls at the intersection of communication, networking, and computing.

\appendices
\numberwithin{equation}{section}

\section{Proof of Lemma 1} \label{ap1}

\begin{proof}
Similar to the steps of \cite{shi2011iteratively}, we first rewrite the RF rate (\ref{R_RF}) as:
\begin{equation}\label{R_{n_{i}}}
  R_{ij}^{RF}=\log \mathbf{det}\left((\mathbf{e}_{ij}^{mmse})^{-1}\right), \forall i \in \mathcal{I}, \ \forall j\in\mathcal{U}_{i},
\end{equation}
where $\mathbf{e}_{ij}^{mmse}$ is the associated MSE covariance martix.

The objective of problem (\ref{EHM7_first}) (and equivalently (\ref{EHM6})) can therefore be rewritten as follows:
\begin{equation}\label{Covergence1}
  f_{1}(\mathbf{w}_{ij})=\sum_{i\in\mathcal{I}-\{0\}}\sum_{j\in\mathcal{U}_{i}}\lambda_{ij}\mathbf{det}\left ((\mathbf{e}_{ij}^{mmse})^{-1}\right)+\sum_{j\in\mathcal{U}_{0}}\lambda_{0j}\tau_{0j}.
\end{equation}
Similarly, the objective function of the equivalent problem (\ref{EHM7}) can be rewritten as:
\begin{equation}\label{Covergence2}
  f_{2}({\boldsymbol{\rho}_{ij},\mathbf{u}_{ij},\mathbf{w}_{ij}})=\sum_{i\in\mathcal{I}}\sum_{j\in\mathcal{U}_{i}}\lambda_{ij}\left(\mathbf{Tr}(\boldsymbol{\rho}_{ij}\mathbf{e}_{ij})-\log \text{det}(\boldsymbol{\rho}_{ij})\right).
\end{equation}

Since problem (\ref{EHM7}) is further differentiable, and since its constraints set is separable in the variables $\boldsymbol{\rho}_{ij},\mathbf{u}_{ij},\mathbf{w}_{ij}$, iteratively solving for one of the variables while fixing the two others, i.e., using a block coordinate descent approach, is guaranteed to converge to a stationary point \cite{shi2011iteratively}.
To finalize the proof, it suffices to show that the stationary point of (\ref{EHM7}) is the same as the stationary point of (\ref{EHM7_first}) (and equivalently (\ref{EHM6})), and that the converse is true.

Since the variables $\boldsymbol{\rho}_{ij},\mathbf{u}_{ij}$ are unconstrained, their respective first order optimality conditions yield optimal $\boldsymbol{\rho}_{ij}^{*},\mathbf{u}_{ij}^{*}$ with expressions similar to (\ref{MMSE receiver}) and (\ref{MSE_weight}), i.e.:
\begin{equation}\label{U}
\mathbf{u}_{ij}^{*}=\mathbf{u}_{ij}^{mmse},\quad  \boldsymbol{\rho}_{ij}^{*}=(\mathbf{e}_{ij}^{mmse})^{-1}, \forall i \in \mathcal{I}, \ \forall j\in\mathcal{U}_{i}.
\end{equation}
Let $\mathbf{w}_{ij, r}$ be the $r$th entry of the vector $\mathbf{w}_{ij}$, we get:
\begin{scriptsize}
\begin{eqnarray}
\label{KKT}
  \!\frac{\partial f_{2}({\boldsymbol{\rho}_{ij}^{*},\mathbf{u}_{ij}^{*},\mathbf{w}_{ij}^{*}})}{\partial \mathbf{w}_{ij, r}}\!
   &\!=\!& \!\sum_{i\in\mathcal{I}-\{0\}}\! \!\sum_{j\in\mathcal{U}_{i}}\!\lambda_{ij}\mathbf{Tr}\left((\mathbf{e}_{ij}^{mmse})^{-1}\frac{\partial\mathbf{e}_{ij}^{mmse}(\mathbf{w}_{ij}^{*})}{\partial \mathbf{w}_{ij, r}}\right) \nonumber\\
             & &
  \!+\! \sum_{j\in\mathcal{U}_{0}}\lambda_{0j}\frac{\partial \tau_{0j}}{\partial \mathbf{w}_{0j, r}}
  \label{lastequality}\\
   &\!=\!& \frac{\partial f_{1}(\mathbf{w}_{ij}^{*})}{\partial \mathbf{w}_{ij, r}},
\end{eqnarray}
\end{scriptsize}
\noindent where the second term of the equality (\ref{lastequality}) can be further developed as:
\begin{scriptsize}
\begin{equation}
\label{weight1}
 \lambda_{0j}\frac{\partial \tau_{0j}}{\partial  \mathbf{w}_{0j, r}}=
\begin{cases}
\mathbf{Tr}\left((\mathbf{e}_{0j}^{mmse})^{-1}\frac{\partial\mathbf{e}_{0j}^{mmse}(\mathbf{w}_{0j}^{*})}{\partial \mathbf{w}_{0j, r}}\right),& \ \tau_{0j}=R_{0j}^{HAPS\_RF}.\\
0,& \ \tau_{0j}=R^{FSO}.
\end{cases}
\end{equation}
\end{scriptsize}
The converse follows a reversely traversed equality path, which proves Lemma 1.
\end{proof}

\bibliography{my_bibliography}

\begin{thebibliography}{10}
\providecommand{\url}[1]{#1}
\csname url@samestyle\endcsname
\providecommand{\newblock}{\relax}
\providecommand{\bibinfo}[2]{#2}
\providecommand{\BIBentrySTDinterwordspacing}{\spaceskip=0pt\relax}
\providecommand{\BIBentryALTinterwordstretchfactor}{4}
\providecommand{\BIBentryALTinterwordspacing}{\spaceskip=\fontdimen2\font plus
\BIBentryALTinterwordstretchfactor\fontdimen3\font minus
  \fontdimen4\font\relax}
\providecommand{\BIBforeignlanguage}[2]{{%
\expandafter\ifx\csname l@#1\endcsname\relax
\typeout{** WARNING: IEEEtran.bst: No hyphenation pattern has been}%
\typeout{** loaded for the language `#1'. Using the pattern for}%
\typeout{** the default language instead.}%
\else
\language=\csname l@#1\endcsname
\fi
#2}}
\providecommand{\BIBdecl}{\relax}
\BIBdecl

\bibitem{ShashaTWC2022}
\BIBentryALTinterwordspacing
S.~Liu, H.~Dahrouj, and M.-S. Alouini, ``Joint user association and beamforming
  in integrated satellite-{HAPS}-ground networks,'' 2022. [Online]. Available:
  \url{https://arxiv.org/abs/2204.13257.}
\BIBentrySTDinterwordspacing

\bibitem{saeed2021P2P}
N.~Saeed, H.~Almorad, H.~Dahrouj, T.~Y. Al-Naffouri, J.~S. Shamma, and M.-S.
  Alouini, ``Point-to-point communication in integrated satellite-aerial 6{G}
  networks: State-of-the-art and future challenges,'' \emph{IEEE Open Journal
  of the Communications Society}, 2021.

\bibitem{alam2021high}
M.~S. Alam, G.~K. Kurt, H.~Yanikomeroglu, and P.~Zhu, ``High altitude platform
  station based super macro base station constellations,'' \emph{IEEE
  Communications Magazine}, vol.~59, no.~1, pp. 103--109, 2021.

\bibitem{qiu2019air}
J.~Qiu, D.~Grace, G.~Ding, M.~D. Zakaria, and Q.~Wu, ``Air-ground heterogeneous
  networks for 5{G} and beyond via integrating high and low altitude
  platforms,'' \emph{IEEE Wireless Communications}, vol.~26, no.~6, pp.
  140--148, 2019.

\bibitem{arum2020review}
S.~C. Arum, D.~Grace, and P.~D. Mitchell, ``A review of wireless communication
  using high-altitude platforms for extended coverage and capacity,''
  \emph{Computer Communications}, vol. 157, pp. 232--256, 2020.

\bibitem{kurt2021vision}
G.~K. Kurt, M.~G. Khoshkholgh, S.~Alfattani, A.~Ibrahim, T.~S. Darwish, M.~S.
  Alam, H.~Yanikomeroglu, and A.~Yongacoglu, ``A vision and framework for the
  high altitude platform station ({HAPS}) networks of the future,'' \emph{IEEE
  Communications Surveys \& Tutorials}, vol.~23, no.~2, pp. 729--779, 2021.

\bibitem{mohammed2011role}
A.~Mohammed, A.~Mehmood, F.-N. Pavlidou, and M.~Mohorcic, ``The role of
  high-altitude platforms ({HAPs}) in the global wireless connectivity,''
  \emph{Proceedings of the IEEE}, vol.~99, no.~11, pp. 1939--1953, 2011.

\bibitem{karapantazis2005broadband}
S.~Karapantazis and F.~Pavlidou, ``Broadband communications via high-altitude
  platforms: A survey,'' \emph{IEEE Communications Surveys \& Tutorials},
  vol.~7, no.~1, pp. 2--31, 2005.

\bibitem{fidler2010optical}
F.~Fidler, M.~Knapek, J.~Horwath, and W.~R. Leeb, ``Optical communications for
  high-altitude platforms,'' \emph{IEEE Journal of selected topics in quantum
  electronics}, vol.~16, no.~5, pp. 1058--1070, 2010.

\bibitem{henniger2010introduction}
H.~Henniger and O.~Wilfert, ``An introduction to free-space optical
  communications.'' \emph{Radioengineering}, vol.~19, no.~2, 2010.

\bibitem{liu2012achieving}
L.~Liu, R.~Zhang, and K.-C. Chua, ``Achieving global optimality for weighted
  sum-rate maximization in the {K}-user {Gaussian} interference channel with
  multiple antennas,'' \emph{IEEE Transactions on Wireless Communications},
  vol.~11, no.~5, pp. 1933--1945, 2012.

\bibitem{kim2009interference}
T.~H. Kim and S.~Choi, ``Interference mitigation via scheduling for the {MIMO}
  broadcast channel with limited feedback,'' in \emph{2009 IEEE 20th
  International Symposium on Personal, Indoor and Mobile Radio
  Communications}.\hskip 1em plus 0.5em minus 0.4em\relax IEEE, 2009, pp.
  2035--2039.

\bibitem{wang2008user}
J.~Wang, D.~J. Love, and M.~D. Zoltowski, ``User selection with zero-forcing
  beamforming achieves the asymptotically optimal sum rate,'' \emph{IEEE
  Transactions on Signal Processing}, vol.~56, no.~8, pp. 3713--3726, 2008.

\bibitem{reifert2021distributed}
R.-J. Reifert, A.~A. Ahmad, H.~Dahrouj, A.~Chaaban, A.~Sezgin, T.~Y.
  Al-Naffouri, and M.-S. Alouini, ``Distributed {Resource} {Management} in
  {Downlink} {Cache}-enabled {Multi-cloud Radio Access Networks},'' \emph{arXiv
  preprint arXiv:2104.03664}, 2021.

\bibitem{shen2018fractional}
K.~Shen and W.~Yu, ``Fractional programming for communication systems—{Part
  II: Uplink scheduling via matching},'' \emph{IEEE Transactions on Signal
  Processing}, vol.~66, no.~10, pp. 2631--2644, 2018.

\bibitem{douik2020mode}
A.~Douik, H.~Dahrouj, O.~Amin, B.~Aloquibi, T.~Y. Al-Naffouri, and M.-S.
  Alouini, ``{Mode Selection and Power Allocation in Multi-Level Cache-Enabled
  Networks},'' \emph{IEEE Communications Letters}, vol.~24, no.~8, pp.
  1789--1793, 2020.

\bibitem{douik2020tutorial}
A.~Douik, H.~Dahrouj, T.~Y. Al-Naffouri, and M.-S. Alouini, ``{A tutorial on
  clique problems in communications and signal processing},'' \emph{Proceedings
  of the IEEE}, vol. 108, no.~4, pp. 583--608, 2020.

\bibitem{Murat}
M.~Dörterler, ``A new genetic algorithm with agent-based crossover for
  generalized assignment problem,'' \emph{Information Technology And Control},
  vol.~48, pp. 389--400, 2019.

\bibitem{dahrouj2010coordinated}
H.~Dahrouj and W.~Yu, ``Coordinated beamforming for the multicell multi-antenna
  wireless system,'' \emph{IEEE transactions on wireless communications},
  vol.~9, no.~5, pp. 1748--1759, 2010.

\bibitem{dahrouj2011multicell}
H.~Dahrouj and W.~Yu, ``Multicell interference mitigation with joint
  beamforming and common message decoding,'' \emph{IEEE Transactions on
  Communications}, vol.~59, no.~8, pp. 2264--2273, 2011.

\bibitem{shi2011iteratively}
Q.~Shi, M.~Razaviyayn, Z.-Q. Luo, and C.~He, ``An iteratively weighted {MMSE}
  approach to distributed sum-utility maximization for a {MIMO} interfering
  broadcast channel,'' \emph{IEEE Transactions on Signal Processing}, vol.~59,
  no.~9, pp. 4331--4340, 2011.

\bibitem{dai2014sparse1}
B.~Dai and W.~Yu, ``Sparse beamforming and user-centric clustering for downlink
  cloud radio access network,'' \emph{IEEE Access}, vol.~2, pp. 1326--1339,
  2014.

\bibitem{shen2018fractional1}
K.~Shen and W.~Yu, ``{Fractional programming for communication systems—Part
  I: Power control and beamforming},'' \emph{IEEE Transactions on Signal
  Processing}, vol.~66, no.~10, pp. 2616--2630, 2018.

\bibitem{yu2013multicell}
W.~Yu, T.~Kwon, and C.~Shin, ``Multicell coordination via joint scheduling,
  beamforming, and power spectrum adaptation,'' \emph{IEEE Transactions on
  Wireless Communications}, vol.~12, no.~7, pp. 1--14, 2013.

\bibitem{khan2018optimizing}
A.~A. Khan, R.~Adve, and W.~Yu, ``{Optimizing multicell scheduling and
  beamforming via fractional programming and Hungarian algorithm},'' in
  \emph{2018 IEEE Globecom Workshops (GC Wkshps)}.\hskip 1em plus 0.5em minus
  0.4em\relax IEEE, 2018, pp. 1--6.

\bibitem{alzenad2019coverage}
M.~Alzenad and H.~Yanikomeroglu, ``{Coverage and rate analysis for vertical
  heterogeneous networks (VHetNets)},'' \emph{IEEE Transactions on Wireless
  Communications}, vol.~18, no.~12, pp. 5643--5657, 2019.

\bibitem{cherif2020downlink}
N.~Cherif, M.~Alzenad, H.~Yanikomeroglu, and A.~Yongacoglu, ``{Downlink
  coverage and rate analysis of an aerial user in vertical heterogeneous
  networks (VHetNets)},'' \emph{IEEE Transactions on Wireless Communications},
  vol.~20, no.~3, pp. 1501--1516, 2020.

\bibitem{jia2020sum}
H.~Jia, Y.~Wang, M.~Liu, and Y.~Chen, ``Sum-rate maximization for {UAV} aided
  wireless power transfer in space-air-ground networks,'' \emph{IEEE Access},
  vol.~8, pp. 216\,231--216\,244, 2020.

\bibitem{wang2022resource}
L.~Wang, X.~Zhao, C.~Wang, and W.~Wang, ``Resource allocation algorithm based
  on power control and dynamic transmission protocol configuration for
  {HAPS-IMT} integrated system,'' \emph{Electronics}, vol.~20, no.~1, pp.
  44--64, 2021.

\bibitem{alsharoa2020improvement}
A.~Alsharoa and M.-S. Alouini, ``Improvement of the global connectivity using
  integrated satellite-airborne-terrestrial networks with resource
  optimization,'' \emph{IEEE Transactions on Wireless Communications}, vol.~19,
  no.~8, pp. 5088--5100, 2020.

\bibitem{yahia2021haps}
O.~B. Yahia, E.~Erdogan, G.~K. Kurt, I.~Altunbas, and H.~Yanikomeroglu, ``{HAPS
  selection for hybrid RF/FSO satellite networks},'' \emph{arXiv preprint
  arXiv:2107.12638}, 2021.

\bibitem{alzenad2018fso}
M.~Alzenad, M.~Z. Shakir, H.~Yanikomeroglu, and M.-S. Alouini, ``{FSO}-based
  vertical backhaul/fronthaul framework for 5{G}+ wireless networks,''
  \emph{IEEE Communications Magazine}, vol.~56, no.~1, pp. 218--224, 2018.

\bibitem{ganian2019solving}
R.~Ganian and S.~Ordyniak, ``Solving integer linear programs by exploiting
  variable-constraint interactions: A survey,'' \emph{Algorithms}, vol.~12,
  no.~12, pp. 248--261, 2019.

\bibitem{ross1975branch}
G.~T. Ross and R.~M. Soland, ``{A branch and bound algorithm for the
  generalized assignment problem},'' \emph{Mathematical programming}, vol.~8,
  no.~1, pp. 91--103, 1975.

\bibitem{boyd2004convex}
S.~Boyd and L.~Vandenberghe, \emph{Convex optimization}.\hskip 1em plus 0.5em
  minus 0.4em\relax Cambridge university press, 2004.

\bibitem{dahrouj2015distributed}
H.~Dahrouj, T.~Y. Al-Naffouri, and M.-S. Alouini, ``Distributed cloud
  association in downlink multicloud radio access networks,'' in \emph{2015
  49th Annual Conference on Information Sciences and Systems (CISS)}.\hskip 1em
  plus 0.5em minus 0.4em\relax IEEE, 2015, pp. 1--3.

\end{thebibliography}
\bibliographystyle{IEEEtran}
\end{document}